\let\doendproof\endproof
\renewcommand\endproof{~\hfill$\qed$\doendproof}
\newcommand{\mso}[1]{$\mathsf{MSO}_{#1}$}
\newcommand{\fo}{$\mathsf{FO}$}
\definecolor{darkred}{rgb}{0.7,0,0}
\newcommand{\mw}{\mathtt{mw}}  
\newcommand{\OPT}{A}
\newenvironment{listing}[1]{%
  \begin{list}{*}{%
      \settowidth{\labelwidth}{#1}%
      \setlength{\leftmargin}{\labelwidth}%
      \advance \leftmargin by 5pt
      \setlength{\itemsep}{0pt}%
      \setlength{\parsep}{0pt}%
      \setlength{\topsep}{0pt}%
      \setlength{\parskip}{0pt}%
    }%
  }{%
  \end{list}
}
\newcommand{\problemtitle}[1]{\gdef\@problemtitle{#1}}
\newcommand{\probleminput}[1]{\gdef\@probleminput{#1}}
\newcommand{\problemquestion}[1]{\gdef\@problemquestion{#1}}
  \par\addvspace{.5\baselineskip}
  \par\addvspace{.5\baselineskip}
\begin{document}
\title{Grouped Domination Parameterized by Vertex Cover, Twin Cover, and Beyond\texorpdfstring{\thanks{%
Partially supported
by JSPS KAKENHI Grant Numbers 
JP17H01698, 
JP17K19960, 
JP18H04091, 
JP20H05793, 
JP20H05967, 
JP21K11752, 
JP21H05852, 
JP21K17707, 
JP21K19765, 
and JP22H00513. 
}}{}}
\titlerunning{Grouped Domination Parameterized by Structural Parameters}
%
\author{Tesshu Hanaka\inst{1}\orcidID{0000-0001-6943-856X} \and
Hirotaka Ono\inst{2}\orcidID{0000-0003-0845-3947} \and
Yota Otachi\inst{2}\orcidID{0000-0002-0087-853X} \and 
Saeki Uda\inst{2}}
\authorrunning{T. Hanaka et al.}
%
\institute{Kyushu University, Fukuoka, Japan \email{hanaka@inf.kyushu-u.ac.jp} \and
Nagoya University, Nagoya, Japan \email{ono@nagoya-u.jp, otachi@nagoya-u.jp, uda.saeki.z4@s.mail.nagoya-u.ac.jp}}
\maketitle              
\begin{abstract}
A dominating set $S$ of graph $G$ is called an \emph{$r$-grouped dominating set} if $S$ can be partitioned into $S_1,S_2,\ldots,S_k$ such that the size of each unit $S_i$ is $r$ and the subgraph of $G$ induced by $S_i$ is connected. The concept of $r$-grouped dominating sets generalizes several well-studied variants of dominating sets with requirements for connected component sizes, such as the ordinary dominating sets ($r=1$), paired dominating sets ($r=2$), and connected dominating sets ($r$ is arbitrary and $k=1$). 
In this paper, we investigate the computational complexity of \textsc{$r$-Grouped Dominating Set}, which is the problem of deciding whether a given graph has an $r$-grouped dominating set with at most $k$ units. For general $r$, \textsc{$r$-Grouped Dominating Set} is hard to solve in various senses because the hardness of the connected dominating set is inherited. We thus focus on the case in which  $r$ is a constant or a parameter, but we see that \textsc{$r$-Grouped Dominating Set} for every fixed $r>0$ is still hard to solve. From the observations about the hardness, we consider the parameterized complexity concerning well-studied graph structural parameters. 
We first see that \textsc{$r$-Grouped Dominating Set} is fixed-parameter tractable for $r$ and treewidth, 
which is derived from the fact that the condition of $r$-grouped domination for a constant $r$ can be represented as monadic second-order logic (\mso{2}).
This fixed-parameter tractability is good news, but the running time is not practical.  
We then design an $O^*(\min\{(2\tau(r+1))^{\tau},(2\tau)^{2\tau}\})$-time algorithm for general $r\ge 2$, where $\tau$ is the twin cover number, which is a parameter between vertex cover number and clique-width. 
For paired dominating set and trio dominating set, i.e., $r \in \{2,3\}$, we can speed up the algorithm, whose running time becomes $O^*((r+1)^\tau)$. We further argue the relationship between FPT results and graph parameters, which draws the parameterized complexity landscape of \textsc{$r$-Grouped Dominating Set}. 

\keywords{Dominating Set \and Paired Dominating Set \and Parameterized Complexity \and Graph Structural Parameters.}
\end{abstract}
\section{Introduction}
\subsection{Definition and motivation}\label{sec:intro1}
Given an undirected graph $G=(V,E)$, a vertex set $S\subseteq V$ is called a \emph{dominating set} if every vertex in $V$ is either in $S$ or adjacent to a vertex in $S$. The dominating set problem is the problem of finding a dominating set with the minimum cardinality. Since the definition of dominating set, i.e., covering all the vertices via edges, is natural, many practical and theoretical problems are modeled as dominating set problems with additional requirements;  many variants of dominating set are considered and investigated. Such variants somewhat generalize or extend the ordinary dominating set based on theoretical or applicational motivations. 
In this paper, we focus on variants that require the dominating set to satisfy specific connectivity and size constraints. 
One example considering connectivity is the connected dominating set. A dominating set is called a \emph{connected dominating set} if the subgraph induced by a dominating set is connected. Another example is the paired dominating set. 
A paired dominating set is a dominating set of a graph such that the subgraph induced by it admits a perfect matching.

This paper introduces the $r$-grouped dominating set, which generalizes the connected dominating set,  the paired dominating set, and some other variants. 
A dominating set $S$ is called an \emph{$r$-grouped dominating set} if $S$ can be partitioned into $\{S_1,S_2,\ldots,S_k\}$ such that each $S_i$ is a set of $r$ vertices and $G[S_i]$ is connected. We call each $S_i$ a \emph{unit}. The $r$-grouped dominating set generalizes both the connecting dominating set and the paired dominating set in the following sense: a connecting dominating set with $r$ vertices is equivalent to an $r$-grouped dominating set of one unit, and a paired dominating set with $k$ pairs is equivalent to a $2$-grouped dominating set with $k$ units. 

This paper investigates the parameterized complexity of deciding whether a given graph has an $r$-grouped dominating set with $k$ units. The parameters that we focus on are so-called graph structural parameters, such as vertex cover number and twin-cover number.
The results obtained in this paper are summarized in Our Contribution (Section \ref{sec:contribution}). 

\subsection{Related work}
An enormous number of papers study the dominating set problem, including the ones strongly related to the $r$-grouped dominating set. 

The dominating set problem is one of the most important graph optimization problems.
Due to its NP-hardness, its tractability is finely studied from several aspects, such as approximation, solvable graph classes, fast exact exponential-time solvability, and parameterized complexity. Concerning the parameterized complexity, the dominating set problem is W[2]-complete for solution size $k$; it is unlikely to be fixed-parameter tractable~\cite{CyganFKLMPPS15}. 
On the other hand, since the dominating set can be expressed in \mso{1}, it is FPT when parametrized by clique-width or treewidth (see, e.g., \cite{Kreutzer11}).

The connected dominating set is a well-studied variant of dominating set. This problem arises in communication and computer networks such as mobile ad hoc networks. It is also W[2]-hard when parameterized by the solution size \cite{CyganFKLMPPS15}. Furthermore, the connected dominating set also can be expressed in \mso{1};  it is FPT when parametrized by clique-width and treewidth as in the ordinary dominating set problem.
Furthermore, single exponential-time algorithms for connected dominating set parameterized by treewidth can be obtained by the Cut \& Count technique \cite{CNPMMVW2022} or the rank-based approach \cite{BCKN2015}.

The notion of the paired dominating set is introduced in \cite{HS1995:paired,HS1998:paired} by Haynes and Slater  as a model of dominating sets with pairwise backup. 
It is NP-hard on split graphs, bipartite graphs~\cite{CLZ2010:paierd}, graphs of maximum degree 3~\cite{CLZ2009:paired:approx}, and planar graphs of maximum degree 5 \cite{TripathiKPPW22},  whereas it can be solved in polynomial time on strongly-chordal graphs~\cite{CLZ2009:paierd}, distance-hereditary graphs~\cite{LKH2020:paired}, and AT-free graphs~\cite{TripathiKPPW22}.  There are several graph classes (e.g., strongly orderable graphs \cite{PP2019:paired}) where the paired dominating set problem is tractable, whereas the ordinary dominating set problem remains NP-hard. For other results about the paired dominating set, see a survey~\cite{Desormeaux2020}.

\subsection{Our contributions}\label{sec:contribution}
This paper provides a unified view of the parameterized complexity of dominating set problem variants with connectivity and size constraints.   

As mentioned above, an $r$-grouped dominating set of $G$ with $1$ unit is equivalent to a connected dominating set with size $r$, which implies that some hardness results of \textsc{$r$-Grouped Dominating Set} for general $r$ are inherited directly from \textsc{Connected Dominating Set}. From these, we mainly consider the case where $r$ is a constant or a parameter. 

Unfortunately, \textsc{$r$-Grouped Dominating Set} for $r=1,2$ is also hard to solve again because \textsc{$1$-Grouped Dominating Set} and 
\textsc{$2$-Grouped Dominating Set} are respectively the ordinary dominating set problem and the paired dominating set problem. 
Thus, it is worth considering whether a larger but constant $r$ enlarges, restricts, or leaves unchanged the graph classes for which similar hardness results hold. A way to classify or characterize graphs of certain classes is to focus on graph-structural parameters. 
By observing that the condition of $r$-grouped dominating set can be represented as monadic second-order logic (\mso{2}), we can see that \textsc{$r$-Grouped Dominating Set} is fixed-parameter tractable for $r$ and treewidth. Recall that the condition of the connected dominating set can be represented as monadic second-order logic (\mso{1}), which implies that there might exist a gap between $r=1$ and $2$, or between $k=1$ and $k>1$. 
Although this FPT result is good news, its time complexity is not practical. From these observations, we focus on less generalized graph structural parameters, vertex cover number $\nu$ or twin cover number $\tau$ as a parameter, and design single exponential fixed-parameter algorithms for \textsc{$r$-Grouped Dominating Set}. 

Our algorithm is based on dynamic programming on nested partitions of a vertex cover, and its running time is $O^*(\min\{(2\nu(r+1))^{\nu},(2\nu)^{2\nu}\})$ for general $r\ge 2$.
For paired dominating set and trio dominating set, i.e., $r \in \{2,3\}$, we can tailor the algorithm to run in $O^*((r+1)^\nu)$ time by observing that the nested partitions of a vertex cover degenerate in some sense. 

We then turn our attention to a more general parameter, the twin cover number.
We show that, given a twin cover, \textsc{$r$-Grouped Dominating Set} admits an optimal solution 
in which twin-edges do not contribute to the connectivity of $r$-units.
This observation implies that these edges can be removed from the graph, and thus we can focus on the resultant graph of bounded vertex cover number.
Hence, we can conclude that our algorithms still work when the parameter $\nu$ in the running time is replaced with twin cover number $\tau$.

We further argue the relationship between FPT results and graph parameters. The perspective is summarized in Figure \ref{fig:parameters0}, which draws the parameterized complexity landscape of \textsc{$r$-Grouped Dominating Set}.

\begin{figure}[tbh]
  \centering


\definecolor{lightlightgray}{rgb}{0.85,0.85,0.85}

\begin{tikzpicture}[every node/.style={draw, thick, rectangle, rounded corners,align=center},scale=0.9]
  \scriptsize

  \draw[ultra thick,blue]  (-4.5,5.5) -- (4.5,5.5);
  \draw[ultra thick,blue,->] (-4.4,5.5) -- (-4.4,5);
  \node[draw=none,blue] at (-2.75,5.1) {FPT (${}+{r+k}$) \\ Corollaries~\ref{cor:r+k_nwd} and \ref{cor:r+k+twin-width}};

  \draw[ultra thick,blue]  (2,4.1) -- (4.5,4.1);
  \draw[ultra thick,blue,->] (2.75,4.1) -- (2.75,3.6);
  \node[draw=none,blue] at (4,3.7) {FPT ($+r$) \\ Corollary~\ref{cor:r+tw}};

  \draw[ultra thick,darkred] (-4.5,3.6) -- (-2,3.6) to[out=0,in=90] (-1.5,3.1) -- (-1.5,1.1) to[out=270,in=180] (-1,.6) -- (2, 0.6) -- (4.5,.6);
  \draw[ultra thick,darkred,->] (-4.4,3.6) -- (-4.4,4.1);
  \node[draw=none,darkred] at (-3,4) {W[1]-hard ($+k$) \\ Theorem~\ref{thm:k+td+fvs}};

  \draw[ultra thick,blue] (-4.5,3.5) -- (-2.1,3.5) to[out=0,in=90] (-1.6,3) -- (-1.6,1) to[out=270,in=180] (-1.1,.5) -- (2, 0.5) -- (4.5,0.5);
  \draw[ultra thick,blue,->] (2.9,0.5) -- (2.9,0);
  \node[draw=none,blue] at (4,0.1) {FPT \\ Theorem~\ref{thm:mw}};
  \node[draw=none,blue] at (-3,.35) {$2^{O(\tau \log \tau)}$ time \\ Corollary~\ref{cor:tau^tau}};

  \node (twin) at  (0, 5) {twin-width};
  \node (nwd)  at  (3, 5) {nowhere dense};
  \node (cw)   at  (0, 4) {clique-width};
  \node (mw)   at (-3, 3) {modular-width};
  \node (tw)   at  (3, 3) {treewidth};
  \node (fvs)  at  (0, 2) {feedback vertex set};
  \node (pw)   at  (3, 2) {pathwidth};
  \node (tc)   at (-3, 1) {twin cover};
  \node (td)   at  (3, 1) {treedepth};
  \node (vc)   at  (0, 0) {vertex cover};

  \draw[thick] (twin)--(cw)--(mw)--(tc)--(vc);
  \draw[thick] (nwd)--(tw);
  \draw[thick] (cw)--(tw)--(pw)--(td)--(vc);
  \draw[thick] (tw)--(fvs)--(vc);
\end{tikzpicture}
  \caption{The complexity of $r$-\textsc{Grouped Dominating Set} with respect to structural graph parameters.
    An edge between two parameters indicates that there is a function in the one above 
    that lower-bounds the one below (e.g., $\text{treewidth} \le \text{pathwidth}$).}
  \label{fig:parameters0}
\end{figure}
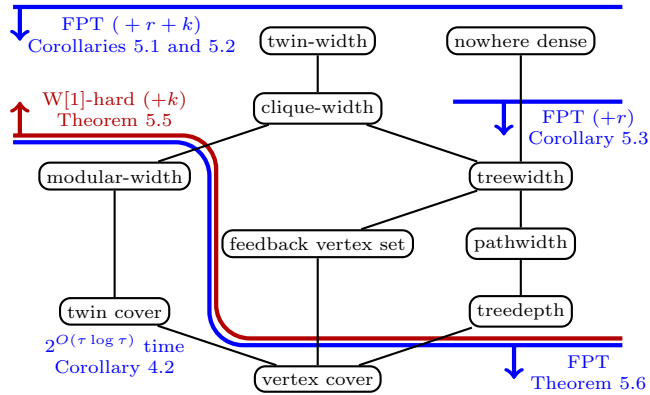



\section{Preliminaries}
Let $G=(V,E)$ be an undirected graph. For a vertex subset $V'\subseteq V$, the subgraph induced by $V'$ is denoted by $G[V']$. Also, let us denote by $N(v)$ and $N[v]$ the open neighborhood and the closed neighborhood of $v$, respectively. The degree  of a vertex $v$ is defined by $d(v) = |N(v)|$. The maximum degree of $G$ is denoted by $\Delta$.

A vertex set $S$ is a \emph{vertex cover} of $G$ if for every edge $\{u,v\}\in E$, at least one of $u,v$ is in $S$.
The \emph{vertex cover number} $\nu$ of $G$ is defined by the size of a minimum vertex cover of $G$. A minimum vertex cover of $G$ can be found in $O^*(1.2738^{\nu})$ time~\cite{CKX2010}.\footnote{The $O^*$ notation suppresses the polynomial factors of the input size.}

Two vertices $u$ and $v$ are (\emph{true}) \emph{twins} if $N[u]=N[v]$.
An edge  $\{u,v\}\in E$ is a \emph{twin edge} if $u$ and $v$ are true twins.
A vertex set $S$ is a \emph{twin cover} if for every edge $\{u,v\}\in E$, either $\{u,v\}$ is a twin edge, or at least one of $u,v$ is in $S$.
The size $\tau$ of a minimum twin cover of $G$ is called the \emph{twin cover number} of $G$. A minimum twin cover of $G$ can be found in $O^*(1.2738^{\tau})$ time~\cite{Ganian2015}.


We briefly introduce basic terminology of parameterized complexity. Given an input size $n$ and a parameter $k$,  a problem is \emph{fixed-parameter tractable (FPT)} if it can be solved in $f(k)n^{O(1)}$ time where $f$ is some computable function. Also,  a problem is \emph{slice-wise polynomial (XP)} if it can be solved in $n^{f(k)}$ time. 
See standard textbooks (e.g., \cite{CyganFKLMPPS15}) for more details.


\subsection{$r$-Grouped Dominating Set}
An \emph{$r$-grouped dominating set with $k$ units} in $G$ is a family $\mathcal{D} = \{D_1, \ldots, D_k\}$ of subsets of $V$  such that $D_i$'s are mutually disjoint, $|D_i|=r$, $G[D_i]$ is connected for $1\le i\le k$, and $\bigcup_{D\in \mathcal{D}} D$ is a dominating set of $G$. For simplicity, let  $\bigcup \mathcal{D}$ denote $\bigcup_{D\in \mathcal{D}} D$. 
We say that $\mathcal{D}$ is a minimum $r$-grouped dominating set if it is an $r$-grouped dominating set with the minimum number of units.

\begin{myproblem}
  \problemtitle{$r$-\textsc{Grouped Dominating Set}}
  \probleminput{A graph $G$ and positive integers $r$ and $k$.}
  \problemquestion{Is there an $r$-grouped dominating set with at most $k$ units in $G$?}
\end{myproblem}

\section{Basic Results}
In this section, we prove $r$-\textsc{Grouped Dominating Set} is W[2]-hard but XP when parameterized by $k+r$ and it is NP-hard even on planar bipartite graphs of maximum degree 3.

We first observe that finding an $r$-grouped dominating set with at most $1$ unit is equivalent to finding a connected dominating set of size $r$.
Thus, the W[2]-hardness of \textsc{$r$-Grouped Dominating Set} parameterized by $r$ follows the one of \textsc{Connected Dominating Set} parameterized by the solution size.
Also,  the case $r = 1$ follows immediately from the hardness of the ordinary  \textsc{Dominating Set}, which is W[2]-complete on split graphs and bipartite graphs \cite{Raman2008}.
In the remaining part of this section, we discuss the hardness results only for the cases $r\ge 2$ and $k\ge 2$.


\begin{theorem}\label{thm:W[2]:r}
For every fixed $k\ge 1$, \textsc{$r$-Grouped Dominating Set} is W[2]-hard when parameterized by $r$ even on split graphs.
\end{theorem}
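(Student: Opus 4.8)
The plan is to give, for each fixed $k\ge 1$, a parameterized reduction from \textsc{Dominating Set} on split graphs with the solution size as parameter, which is W[2]-complete by \cite{Raman2008}. I read such an instance in its set-cover form: the clique $C$ of the split graph plays the role of sets, the independent set $I$ plays the role of elements, and an element is covered by a set exactly when the corresponding vertices are adjacent; the question becomes whether $\ell$ clique vertices dominate all of $I$ (domination inside $C$ is automatic once any clique vertex is chosen, and one may assume an optimal solution lies inside $C$). The guiding idea is that a unit taken inside $C$ is automatically connected, so $k$ units placed in $C$ amount to choosing $kr$ clique vertices; I will arrange the numbers so that this is equivalent to a set cover of size exactly $\ell$.

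For the construction, fix $k$, set $t=(k-(\ell \bmod k))\bmod k$ so that $\ell+t$ is a multiple of $k$, and put $r=(\ell+t)/k$. I build a split graph $G$ whose clique consists of one vertex per set, of $t$ extra \emph{forcing} vertices $f_1,\dots,f_t$, and of $kr$ \emph{dummy} vertices; whose independent set consists of one vertex per element together with $t$ private vertices $e_1,\dots,e_t$; and whose cross edges are the set--element incidences together with the edges $f_ie_i$ (dummies receive no cross edge). The forcing gadgets pin down $t$ units' worth of vertices and thereby correct the non-divisibility of $\ell$ by $k$, while the dummies provide the slack needed to pad partial solutions up to exactly $kr$ vertices.

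The easy direction pads a set cover $S$ of size at most $\ell$ with the $f_i$ and enough dummies to reach $kr$ clique vertices, splits them arbitrarily into $k$ connected units, and checks domination. The converse is the delicate part and the main obstacle: from a $k$-unit solution I must recover a set cover of size at most $\ell$, even though a unit may legally contain vertices of $I$. I would argue that (i) each $e_i$ forces $f_i$ into the solution, since the only neighbour of $e_i$ is $f_i$ and a unit of size $r\ge 2$ containing $e_i$ must also contain a neighbour of $e_i$; hence all $t$ forcing vertices are used and at most $kr-t=\ell$ slots remain; and (ii) every element vertex is covered by a genuine set vertex in the solution, because if an element $u$ itself lies in the solution then its unit, being connected of size at least two, contains a clique-neighbour of $u$, and the only clique vertices adjacent to $u$ are the sets containing $u$ (neither dummies nor forcing vertices touch $u$). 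The set vertices in the solution therefore form a set cover of $U$, and there are at most $\ell$ of them.

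Finally I would check that this is a legitimate parameterized reduction: $G$ is split by construction, its size is polynomial since $kr=\ell+t=O(\ell)$ for fixed $k$, and the new parameter satisfies $r\le\ell\le kr$, so $r$ and $\ell$ are within a constant factor of each other. Since W[2]-hardness in $r$ concerns the regime $r\to\infty$, I may assume $r\ge 2$ throughout, which is all the argument above uses, and this yields the hardness for every fixed $k$.
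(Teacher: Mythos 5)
Your reduction is correct, but it takes a genuinely different route from the paper's. The paper also reduces from \textsc{Dominating Set} on split graphs, but instead of working with a single copy, it takes $k$ disjoint copies $G_1,\dots,G_k$ of the split graph and merges $\bigcup_i C_i$ into one clique; the forward direction places the (WLOG clique-contained) dominating set in every copy, padded to size exactly $r$, and the backward direction is a pigeonhole argument: since $|\bigcup\mathcal{D}|\le rk$, some copy receives at most $r$ solution vertices, and since vertices outside $V_i$ cannot dominate $I_i$, those vertices dominate $G_i$. This keeps the parameter map the identity ($r\mapsto r$) and needs no divisibility bookkeeping, whereas you compress the parameter to $r\approx\ell/k$ and repair non-divisibility with the $t$ pendant pairs $(f_i,e_i)$ plus dummy clique vertices, extracting the cover by counting forced slots. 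Both proofs rely on the same key observation in the converse direction --- a connected unit of size $r\ge 2$ meeting the independent set must contain a clique neighbour of that vertex --- so the combinatorial core is shared; what your version buys is a smaller instance (one copy instead of $k$) and a smaller target parameter, at the cost of gadgetry and of the edge case $r=1$: your closing ``I may assume $r\ge 2$'' should be made rigorous by noting that for fixed $k$ the instances with $\ell+t=k$ have $\ell\le k$ and can be solved by brute force in $n^{O(k)}$ time (polynomial for fixed $k$), outputting a trivial yes- or no-instance; similarly, when $t=0$ you should note explicitly that the solution contains at least one clique vertex (any unit is connected of size $r\ge 2$, and the independent side contains no edges), so the dummies are dominated. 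With those two details spelled out, your argument is a complete and valid alternative proof.
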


\begin{proof}
We give a reduction from \textsc{Dominating Set} on split graphs.
Let $\langle G=(C\cup I, E),r\rangle$ be an instance  of \textsc{Dominating Set} where $C$ forms a clique and $I$ forms an independent set.  Without loss of generality, we suppose that $|C|\ge 2$ and $|I|\ge 2$.
We create $k$ copies $G_1=(V_1, E_1), \ldots, G_k=(V_k, E_k)$ of $G$ where $V_i = \{v^{(i)} \mid v\in V\}$ and $E_i = \{e^{(i)} \mid e\in E\}$. Note that $V_i = C_i \cup I_i$.
Finally, we make $\bigcup_i C_i$ a clique. 
The resulting graph $G'$ is clearly a split graph.

We show that there is a dominating set of size at most $r$ in $G$ if and only if there is an $r$-grouped dominating set with at most $k$ units in $G'$.

Suppose that there is a dominating set $D$ of size at most $r$ in $G$.
Without loss of generality, we can assume that $D\subseteq C$
\cite{Bertossi1984}.
Then we define $D'_i=\{v^{(i)}\mid v\in D\}$ for $1\le i\le k$  and $D' = \bigcup_i D'_i$.
Since $D$ is a dominating set in $G$, so is $D'_i$ on $G_i$ for each $i$. Thus, $D'$ is a dominating set in $G'$.
Because $G_i$ is a split graph and  a clique and $D'_i\subseteq  C_i$, $D'_i$ is a connected dominating set of $G_i$ of size at most $r$. 
If $|D'_i|<r$, we arbitrarily add $r-|D'_i|$ vertices in $G_i$ to $D'_i$. Then, we have a connected dominating set $D'_i$ of $G_i$ of size exactly $r$ for each $i$, which can be regarded as a unit of size $r$ of an $r$-grouped dominating set.
Clearly, $\{D'_i\mid 1\le i\le k\}$ is an $r$-grouped dominating set with $k$ units in $G'$

Conversely, suppose that there is an $r$-grouped dominating set $\mathcal{D}$ with at most $k$ units in $G'$. 
Then there is a vertex set $D_i=\bigcup \mathcal{D}\cap V_i$ of size at most $r$ in some $G_i$ by $|\bigcup \mathcal{D}|\le rk$.
Since $|I_i|\ge 2$,  $D_i$ contains at least one vertex in $C_i$. Moreover, any vertex not in $V_i$ cannot dominate vertices in $I_i$. This means that $D_i$ is a dominating set in $G_i$.
Since $G_i$ is a copy of $G$, there is a dominating set of size at most $r$ in $G$.
\end{proof}

By a similar reduction, we also show that \textsc{$r$-Grouped Dominating Set} is W[2]-hard when parameterized by $k$.

\begin{theorem}\label{thm:W[2]:k:split}
For every fixed $r\ge 1$, \textsc{$r$-Grouped Dominating Set} is W[2]-hard when parameterized by $k$ even on split graphs.
\end{theorem}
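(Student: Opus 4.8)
The plan is to mirror the reduction of Theorem~\ref{thm:W[2]:r}, but with the roles of $r$ and $k$ exchanged: instead of taking $k$ copies and padding each unit up to size $r$, I would take $r$ copies (legitimate since $r$ is now the fixed constant) and let the $r$ copies of a single dominator form one unit. Concretely, I start from \textsc{Dominating Set} on split graphs parameterized by the solution size, which is W[2]-complete \cite{Raman2008}, with an instance $\langle G = (C\cup I, E), k\rangle$ where $C$ is a clique, $I$ an independent set, and (by \cite{Bertossi1984}) the dominating set may be assumed to lie in $C$. I create $r$ disjoint copies $G_1,\ldots,G_r$ of $G$, with $G_i=(C_i\cup I_i, E_i)$, and then turn $\bigcup_i C_i$ into a single clique. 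The resulting graph $G'$ is split, the construction is of polynomial size since $r$ is a fixed constant, and the parameter $k$ is left unchanged.

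For the forward direction, given a dominating set $D=\{v_1,\ldots,v_p\}\subseteq C$ with $p\le k$, I would form the $p$ units $U_t=\{v_t^{(1)},\ldots,v_t^{(r)}\}$, each consisting of the $r$ copies of $v_t$. Each $U_t$ has exactly $r$ vertices and is connected because all its vertices lie in the merged clique $\bigcup_i C_i$; and in every copy $G_i$ the set $\{v_1^{(i)},\ldots,v_p^{(i)}\}$ is a copy of $D$ and hence dominates $G_i$. Thus $\bigcup_t U_t$ dominates $G'$ using at most $k$ units.

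The heart of the argument is the converse, and the main obstacle it must overcome is \emph{aggregation}: because $\bigcup_i C_i$ is a clique, a single unit could contain several clique vertices and thereby pool the covering power of several dominators, so a one-copy padding construction would wrongly let fewer than $k$ units dominate a graph whose original admits no small dominating set. The $r$-copy construction defeats this by a pigeonhole count per copy, exactly transposing the per-copy count of Theorem~\ref{thm:W[2]:r}. Given an $r$-grouped dominating set $\mathcal{D}$ with at most $k$ units, its union has at most $rk$ vertices, which are partitioned among the $r$ copies, so some copy $G_{i^*}$ receives at most $k$ of them; call this set $D_{i^*}$. Since the copies interact only through the merged clique and every vertex of $I_{i^*}$ has all its neighbors inside $C_{i^*}$, the set $I_{i^*}$ can be dominated only from within $V_{i^*}$, so $D_{i^*}$ already dominates $I_{i^*}$ in $G_{i^*}$.

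To finish, I would replace each vertex of $D_{i^*}\cap I_{i^*}$ by an arbitrary clique neighbor (the standard move for split graphs), obtaining a set $D^*\subseteq C$ with $|D^*|\le |D_{i^*}|\le k$ that still dominates $I$ and, being a nonempty subset of the clique, dominates $C$ as well; hence $D^*$ is a dominating set of $G$ of size at most $k$. I expect the only delicate point to be this last step, namely arguing that the extracted per-copy vertices genuinely dominate the clique side $C_{i^*}$ and not merely $I_{i^*}$, which is handled precisely by the clique-replacement argument above.
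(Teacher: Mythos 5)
Your proposal is correct and is essentially the paper's own reduction: the same construction ($r$ disjoint copies of the split graph with $\bigcup_i C_i$ merged into one clique, parameter $k$ unchanged), the same forward direction (the $r$ copies of each dominator form one clique unit), and the same pigeonhole argument extracting a copy $G_{i^*}$ receiving at most $k$ solution vertices. The only cosmetic difference is the finish: the paper observes directly that $|I|\ge 2$ forces a solution vertex in $C_{i^*}$ (so $D_{i^*}$ itself dominates $G_{i^*}$), whereas you achieve the same conclusion by the standard clique-replacement of $I$-vertices, which is equally valid since for $r\ge 2$ every solution vertex in $I_{i^*}$ has a unit-neighbor in $C_{i^*}$.
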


\begin{proof}
We give a reduction from \textsc{Dominating Set} on split graphs.
Let $\langle G=(C\cup I, E),k\rangle$ be an instance  of \textsc{ Dominating Set} where $C$ forms a clique and $I$ forms an independent set. Without loss of generality, we suppose that $|C|\ge 2$ and $|I|\ge 2$.
We create $r$ copies $G_1=(V_1, E_1), \ldots, G_r=(V_r, E_r)$ of $G$ where $V_i = \{v^{(i)} \mid v\in V\}$ and $E_i = \{e^{(i)} \mid e\in E\}$. Note that $V_i = C_i \cup I_i$.
Then we make $\bigcup_i C_i$ a clique. 
The resulting graph $G'$ is clearly a split graph.

We show that there is a dominating set of size at most $k$ in $G$ if and only if there is an $r$-grouped dominating set with at most $k$ units in $G'$.

Suppose that there is a dominating set $D$ of size at most $k$ in $G$.
Without loss of generality, we can assume that $D\subseteq C$.
For each $v\in D$, we define $D_v=\{v^{(i)}\mid 1\le i\le r\}$. Furthermore, let $\mathcal{D}=\{D_v \mid v\in D\}$. We see that $\mathcal{D}$ is an $r$-grouped dominating set with at most $k$ units in $G'$.
Since $D$ is a dominating set in $G$ and $G'$ consists of $r$ copies of $G$, $\bigcup \mathcal{D}$ is clearly a dominating set in $G'$.
Furthermore, because $\bigcup_i C_i$ is a clique, each $D_v$ forms a clique of size $r$, which can be regarded as a unit. 
By the assumption that $|D|\le k$, we conclude that $\mathcal{D}$ is an $r$-grouped dominating set with at most $k$ units in $G'$.

Conversely, suppose that there is an $r$-grouped dominating set $\mathcal{D}$ with at most $k$ units in $G'$. 
We see that there is a dominating set $D$ of size at most $k$ in some $G_i$.
Indeed, $\bigcup \mathcal{D}\cap V_i$ is a dominating set $D$ of size at most $k$ in  $G_i$ because
$|\bigcup \mathcal{D}|\le rk$ and there is a vertex in $\bigcup \mathcal{D}\cap C_i$ by $|I|\ge 2$. 
This completes the proof.
\end{proof}

Furthermore, we show the W[2]-hardness of \textsc{$r$-Grouped Dominating Set} on bipartite graphs.

\begin{theorem}\label{thm:W[2]:r:bipartite}
For every fixed $k\ge 1$, \textsc{$r$-Grouped Dominating Set} is W[2]-hard when parameterized by $r$ even on bipartite graphs.
\end{theorem}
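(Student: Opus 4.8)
The plan is to mirror the reduction of Theorem~\ref{thm:W[2]:r}, replacing the clique that provided connectivity ``for free'' with a star-shaped connector gadget that is compatible with bipartiteness. I would reduce from \textsc{Set Cover} parameterized by the number $p$ of chosen sets, which is W[2]-hard since it is equivalent to \textsc{Dominating Set}~\cite{CyganFKLMPPS15} (and is exactly the problem of choosing few vertices on one side of a bipartite graph so as to dominate the other side). Given a universe $U=\{u_1,\dots,u_m\}$ and sets $\mathcal{S}=\{S_1,\dots,S_n\}$, I build one bipartite gadget with a \emph{blue} part containing a vertex $s_j$ for each set $S_j$ together with $r+1$ pendant ``leaf'' vertices, and a \emph{red} part containing a vertex $x_i$ for each element $u_i$ together with a single connector $c$; I put an edge $s_j x_i$ whenever $u_i\in S_j$, and I join $c$ to every blue vertex. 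Setting $r=p+1$ and taking $k$ disjoint copies of this gadget yields the final bipartite instance $G'$.

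For the forward direction, a cover of size at most $p$ gives, in each copy, the star $\{c\}\cup\{\text{chosen set-vertices}\}$, which is connected, dominates all element-vertices (by the covering property), dominates every blue vertex and every leaf through $c$, and can be padded with leaves up to exactly $r$ vertices; this produces an $r$-grouped dominating set with $k$ units. For the backward direction I first argue that, with at most $k$ units available and each copy forced to be dominated, each copy must receive exactly one unit: a connected unit lies in a single copy, and each copy needs at least one unit, so this is a pigeonhole argument. The role of the $r+1$ leaves is to force the connector $c$ into that unit, since each leaf is adjacent only to $c$, so dominating all $r+1$ leaves of a copy without using $c$ would already require more than $r$ vertices, exceeding the unit's size. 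With $c$ in the unit $T$, deleting $c$ and replacing each stray element-vertex of $T$ by one of its blue neighbours yields a cover of size at most $|T|-1=p$.

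The hard part will be the connectivity bookkeeping in the backward direction, precisely because bipartite graphs offer no cliques: I must ensure that a connected unit of size $r$ cannot evade the intended behaviour by using element-vertices instead of $c$ to achieve connectivity or domination. The leaf gadget is what rules this out, but the size accounting has to be tight---the connector consumes exactly one unit of the budget $r=p+1$, leaving $p$ for the actual cover, and any element-vertex that the unit uses for connectivity necessarily has a blue neighbour inside the unit (otherwise it would be isolated in the induced subgraph), so replacing it by that neighbour never pushes the extracted set beyond $p$. Verifying that the construction stays bipartite is immediate, since all edges run between the blue and red parts and disjoint unions of bipartite graphs are bipartite; the W[2]-hardness then transfers because $r=p+1$ is a function of the source parameter alone while $k$ is fixed.
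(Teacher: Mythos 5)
Your proposal is correct and is essentially the paper's own reduction: the paper likewise starts from a W[2]-hard domination problem (\textsc{Dominating Set} on split graphs, which after deleting the clique edges is exactly your \textsc{Set Cover} incidence graph), adds a universal connector $t$ joined to the whole set side, forces it into the solution via a pendant vertex ($s_2$, playing the role of your $r+1$ leaves), shifts the parameter from $r$ to $r+1$, and takes $k$ copies. Your implementation differences are cosmetic and if anything slightly cleaner --- using disjoint copies lets you argue one unit per copy by pigeonhole, whereas the paper chains its copies via edges $\{s_1^{(i)},s_1^{(i+1)}\}$ and must instead use a size-averaging argument ($|\bigcup\mathcal{D}|\le (r+1)k$, so some copy receives at most $r+1$ solution vertices) to localize a small dominating set in one copy.
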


\begin{proof}
We reduce \textsc{Dominating Set} on split graphs to \textsc{$r$-Grouped Dominating Set}.

We are given an instance $\langle G=C\cup I, E),r\rangle$  of \textsc{Dominating Set}. Without loss of generality,  we assume $|C|\ge 2$, $|I|\ge 2$.
Moreover, we assume that if  $\langle G=C\cup I, E),r\rangle$  is a
yes-instance, there is a dominating set $D$ of size at most $r$ such that $D\subseteq C$ \cite{Bertossi1984}.

We first delete all the edges in the clique $C$, and add two edges $\{s_1,t\},\{s_2,t\}$ and connect $t$ to all the vertices in $C$. 
The obtained graph is bipartite.
We then create $k$ copies $G_1=(V_1\cup \{s_1^{(1)},s_2^{(1)},t^{(1)}\},E_1), \ldots, G_k=(V_k\cup \{s_1^{(k)},s_2^{(k)},t^{(k)}\},E_k)$ of the graph where  $V_i=\{u^{(i)}\mid u\in V\}$ for $1\le i\le k$. To connect $G_1, \ldots, G_k$, we add edges $\{s_1^{(i)},s_1^{(i+1)}\}$ for $1\le i\le k-1$. The resulting graph denoted by $G'$ remains bipartite.

In the following, we show that there is a dominating set of size at most $r$ in $G$ if and only if there is an $(r+1)$-grouped dominating set with $k$ units in $G'$.
Suppose that there is a dominating set $D\subseteq C$ of size at most $r$ in $G$. We assume that $|D|=r$ because otherwise we only have to add $r-|D|$ vertices in $G$ to $D$ arbitrarily.
For each graph $G_i$, define $D_i = \{v^{(i)}\mid v\in D\}$. Since $t^{(i)}$ is connected to $s_1^{(i)}$, $s_2^{(i)}$, and all the vertices in the clique part $C_i$ of $G_i$ and $D_i$ is a dominating set in $G_i[V_i]$, $D_i\cup \{t^{(i)}\}$ is a connected dominating set of size $r+1$ in $G_i$. Therefore, $\{D_i\cup \{t^{(i)}\}\mid 1\le i\le k\}$ is an $(r+1)$-grouped dominating set with $k$ units in $G'$.

Conversely, let $\mathcal{D}$ be an $(r+1)$-grouped dominating set with $k$ units in $G'$.
Since $|\bigcup \mathcal{D}|\le (r+1)k$, some $G_i$ satisfies $|\bigcup \mathcal{D}\cap V_i\cup \{s_1^{(i)},s_2^{(i)},t^{(i)}\}|\le (r+1)$. To dominate $s_2^{(i)}$, $\bigcup \mathcal{D}$ must contains $t^{(i)}$. Note that $r\ge 2$. Thus, $|\bigcup \mathcal{D}\cap V_i|\le r$. 
Furthermore, $G_i$ is bipartite and $|I|\ge 2$, hence there is a vertex $v^{(i)}$ in $\bigcup \mathcal{D}\cap C_i$.
Let $D\subseteq V$ be a set in $G$ corresponding to $\bigcup \mathcal{D}\cap V_i$.
Then $D$ is a dominating set of size $r$ in $G$. Indeed, since $\bigcup \mathcal{D}\cap V_i$ dominates the independent set part of $G_i$, $D$ also dominates the independent set part of $G$.
Moreover, since $G$ is a split graph,  vertex  $v\in V$ corresponding to $v^{(i)}$ dominates all the vertices in $D$.
This completes the proof.
\end{proof}

\begin{theorem}\label{thm:W[2]:k:bipartite}
For every fixed $r\ge 1$, \textsc{$r$-Grouped Dominating Set} is W[2]-hard when parameterized by $k$ even on bipartite graphs.
\end{theorem}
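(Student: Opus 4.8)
The plan is to give a parameterized reduction from \textsc{Dominating Set} on split graphs parameterized by $k$, which is W[2]-hard~\cite{Raman2008}, mirroring the reduction behind Theorem~\ref{thm:W[2]:k:split} but re-engineering the clique so that the target graph becomes bipartite. For $r=1$ the statement is immediate, since \textsc{$1$-Grouped Dominating Set} is ordinary \textsc{Dominating Set}, already W[2]-hard on bipartite graphs~\cite{Raman2008}; so I focus on $r\ge 2$. I start from $G=(C\cup I,E)$ with, without loss of generality, a dominating set contained in $C$~\cite{Bertossi1984}, and take $r$ disjoint copies $G_1,\dots,G_r$ in which all clique edges inside each $C_i$ are deleted (so each copy becomes a bipartite graph keeping only the $C_i$--$I_i$ edges).

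The crucial idea is to let the last copy double as a connector layer instead of adding any auxiliary vertices. I recolor copy $G_r$ (swapping its two sides) and add a complete join between $C_r$ and $C_1\cup\cdots\cup C_{r-1}$. With the two color classes $A=(C_1\cup\cdots\cup C_{r-1})\cup I_r$ and $B=(I_1\cup\cdots\cup I_{r-1})\cup C_r$, the resulting graph $G'$ is bipartite. A unit for a vertex $v$ is then $U_v=\{v^{(1)},\dots,v^{(r)}\}$: it has size exactly $r$ and is connected, being a star centered at $v^{(r)}\in C_r$ whose edges to $v^{(1)},\dots,v^{(r-1)}$ come from the join. The point is that this yields cross-copy connectivity of a size-$r$ unit without spending any extra vertex.

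For the forward direction I map a dominating set $D\subseteq C$ of size at most $k$ to the units $\{U_v : v\in D\}$, giving exactly $k$ units. Each $I_i$ is dominated by $\{v^{(i)} : v\in D\}$ because the $C_i$--$I_i$ edges are preserved; every $C_i$ with $i<r$ is dominated by any selected connector $v^{(r)}$ (adjacent to all of $C_i$ through the join), and $C_r$ is dominated by any selected $v^{(i)}$ with $i<r$. Thus all clique-parts are dominated with no overhead units. For the converse I use that the join touches only the clique-parts, so the neighborhood of each $I_i$ is contained in $V_i=C_i\cup I_i$; hence in any solution the selected vertices inside copy $i$ must already dominate $I_i$. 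Since a solution with at most $k$ units has at most $rk$ vertices spread over $r$ copies, a pigeonhole argument yields a copy whose selected vertices number at most $k$ and dominate $I_i$, and this set converts to a dominating set of $G$ of size at most $k$ (moving vertices into $C$ as in~\cite{Bertossi1984}).

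The step I expect to be the main obstacle is exactly the one the construction is designed around: obtaining connectivity of a cross-copy unit together with domination of the former clique \emph{without} introducing any auxiliary connector or dominator vertices. Any such auxiliary vertex would enlarge the vertex budget beyond $rk$, and because the number of copies equals the unit size $r$, the clean pigeonhole bound of $k$ vertices per copy would degrade to something strictly larger than $k$, invalidating the reduction. Making the last copy serve simultaneously as a genuine copy (so that it participates in the pigeonhole), as the connector of every unit, and as the dominator of all clique-parts is what keeps the budget at exactly $rk$; verifying that this triple role is consistent and that no $I_i$ acquires a cross-copy neighbor is the delicate part of the argument.
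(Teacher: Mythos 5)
Your reduction is correct, but it takes a genuinely different route from the paper's. The paper's proof of Theorem~\ref{thm:W[2]:k:bipartite} keeps a \emph{single} copy of the split graph: it deletes the clique edges and attaches $k$ pendant paths $P_i = (u^{(i)}_1,\dots,u^{(i)}_r)$ whose endpoints $u^{(i)}_1$ are joined to all of $C$; units are paths, and the forcing argument is that dominating each far endpoint $u^{(i)}_r$ already consumes $k(r-1)$ vertices inside the paths, leaving at most $k$ vertices in $C\cup I$. You instead bipartitize the $r$-copy construction of Theorem~\ref{thm:W[2]:k:split}: rather than re-introducing the clique on $\bigcup_i C_i$, you recolor the last copy and put a complete join between $C_r$ and $C_1\cup\cdots\cup C_{r-1}$, so units become stars $\{v^{(1)},\dots,v^{(r)}\}$ centered at $v^{(r)}$ and the budget argument is a pigeonhole over the $r$ copies. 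Both are valid parameterized reductions from \textsc{Dominating Set} on split graphs with parameter $k$ preserved; the paper's gadget yields a smaller instance ($O(n+kr)$ vertices versus your $O(rn)$, harmless since $r$ is fixed) and a cleaner forcing step, while yours needs no auxiliary vertices at all and unifies the split and bipartite cases under one template. One step in your converse deserves a cleaner justification: the set $S_i$ selected in the pigeonholed copy is a priori only known to dominate $I_i$, not all of $G_i$, so the appeal to the conversion of \cite{Bertossi1984} (which is stated for dominating sets) does not directly apply. The fix is immediate and already implicit in your setup: since $N(I_i)\subseteq C_i$ and $r\ge 2$, any unit containing a vertex of $I_i$ must also contain one of its neighbors in $C_i$; hence (assuming the standard $|I|\ge 2$ normalization the paper also uses) $S_i\cap C_i\ne\emptyset$, and the corresponding set in $G$ dominates the clique $C$ outright, giving a dominating set of size at most $k$ without any vertex-moving.
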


\begin{proof}
We reduce \textsc{Dominating Set} on split graphs to \textsc{$r$-Grouped Dominating Set}.

We are given an instance $\langle G=C\cup I, E),k \rangle$  of \textsc{Dominating Set}. Without loss of generality, if  $\langle G=C\cup I, E),k\rangle$  is a yes-instance, there is a dominating set $D$ of size at most $k$ such that $D\subseteq C$ \cite{Bertossi1984}.
Then we construct a bipartite graph $G'$ as follows.
First, delete all the edges in the clique $C$. Then $G$ becomes a bipartite graph. We next add $k$ paths of length $r$ and connect an endpoint of each path to all the vertices in $C$. Let $P_i = (u^{(i)}_1, \ldots, u^{(i)}_r)$ denote such paths for $1\le i\le k$, and $u^{(i)}_1$'s are the endpoints connected to $C$.
The resulting graph $G'$ is bipartite.

Suppose that $G$ has a dominating set $D=\{v_1, \ldots, v_{|D|}\}\subseteq C$ of size at most $k$. From $D$, we construct an $r$-grouped dominating set with $k$ units. For each $v_i\in D$, we choose a path $v_i, u^{(i)}_1, \ldots, u^{(i)}_{r-1}$ of length $r$ as one unit of the $r$-grouped dominating set. If $|D| < k$, we choose the remaining $k-|D|$ paths $u^{(i)}_1, \ldots, u^{(i)}_{r}$ for $|D|+1\le i\le r$. Let $\mathcal{D}$ be the set of such $k$ paths. Then $\mathcal{D}$ is an $r$-grouped dominating set with $k$ units because the length of each path in $\mathcal{D}$ is $r$ and $\bigcup \mathcal{D}$ contains $D$, which dominates all the vertices in the original $G$ and the vertices in $P_i$'s.

Conversely, let $\mathcal{D}$ be an $r$-grouped dominating set with $k$ units in $G'$.
To dominate an endpoint $u^{(i)}_r$ in $P_i$, $\bigcup \mathcal{D}$ must contain $u^{(i)}_{r-1}$, which implies $|\bigcup \mathcal{D} \cap \bigcup_i P_i|\ge k(r-1)$. Thus, we have $|(C\cup I) \cap \bigcup \mathcal{D}|\le k$. Since $\bigcup \mathcal{D}$ is a dominating set of $G'$ and any vertex in $P_i$'s cannot dominate $I$, $(C\cup I) \cap \bigcup \mathcal{D}$ is a dominating set of size $k$ in $G$.
\end{proof}

On the other hand, we can show that the problem is XP  when parameterized by $k+r$.
\begin{theorem}
\textsc{$r$-Grouped Dominating Set} can be solved in  $O^*(\Delta^{O(kr^2)})$ time.
\end{theorem}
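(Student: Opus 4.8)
The plan is to build the solution one unit at a time via a bounded-depth branching procedure guided by an undominated vertex. The key combinatorial fact I would establish first is a bound on the number of candidate \emph{units}, where a unit is a set $D$ with $|D|=r$ and $G[D]$ connected. I claim that for any fixed vertex $u$, the number of units containing $u$ is at most $(r\Delta)^{r}$. This follows from the standard ``connected ordering'' argument: every connected $D \ni u$ of size $r$ admits an enumeration $u=w_1, w_2, \ldots, w_r$ in which each $w_j$ (for $j \ge 2$) is adjacent to some earlier $w_i$; the number of such orderings is at most $\prod_{j=2}^{r}(j-1)\Delta \le (r\Delta)^{r-1}$, and each unit is realized by at least one ordering. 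I would then observe that, assuming $\Delta \ge 2$ (the cases $\Delta \le 1$ being trivially polynomial), $(r\Delta)^r = \Delta^{O(r^2)}$, since the exponent equals $r\log_\Delta(r\Delta) = r(1+\log_\Delta r)$, which is $O(r\log r)$ and in particular $O(r^2)$.

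Next comes the branching itself. I would maintain a partial family $\mathcal{F}$ of already-chosen, pairwise-disjoint units together with a remaining budget $b$ (initially $b=k$). At each node: if $\bigcup \mathcal{F}$ dominates $V$, accept; if $b=0$, reject; otherwise pick any vertex $v$ not dominated by $\bigcup \mathcal{F}$. The crucial observation is that in every valid completion $v$ must be dominated by a not-yet-chosen unit, hence by a unit containing a vertex of $N[v]$. I would therefore branch over all units $D$ with $D \cap N[v] \neq \emptyset$ that are disjoint from $\bigcup \mathcal{F}$, in each branch adding $D$ to $\mathcal{F}$ and decrementing $b$. By the counting step, the number of such units is at most $(\Delta+1)(r\Delta)^r = \Delta^{O(r^2)}$, so the branching factor is $\Delta^{O(r^2)}$ and the recursion depth is at most $k$.

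For correctness I would argue both directions. Any family the algorithm accepts consists of at most $k$ pairwise-disjoint units whose union dominates $V$, hence is a valid solution. Conversely, given a valid solution with $m \le k$ units, an easy induction on $b$ shows that whenever the algorithm selects an undominated vertex $v$, some unit of the solution dominates $v$, so one of the branches keeps the recursion on track to recover a solution. Multiplying the per-node cost (polynomial, as enumerating the candidate units and testing domination are polynomial) by the number of nodes $\Delta^{O(r^2)\cdot k}$ then yields the claimed $O^*(\Delta^{O(kr^2)})$ running time.

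I would expect the main obstacle to lie in the bookkeeping of the counting step, namely converting the raw bound $(r\Delta)^{r}$ into the clean form $\Delta^{O(r^2)}$ and cleanly disposing of the degenerate small-$\Delta$ cases, rather than in the algorithm itself, whose branching structure becomes routine once the candidate-unit bound is in place.
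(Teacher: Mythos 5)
Your proposal is correct and takes essentially the same approach as the paper's proof: branch on a not-yet-dominated vertex, enumerate the $\Delta^{O(r^2)}$ candidate units meeting its closed neighborhood, and recurse to depth at most $k$. The only difference is in the counting detail: you bound the units containing a fixed vertex via a connected-ordering argument, giving $(r\Delta)^{r} = \Delta^{O(r \log r)}$ (in fact slightly tighter), whereas the paper chooses $r-1$ vertices from the radius-$(r-1)$ ball, giving $\binom{\Delta^{r-1}}{r-1} = \Delta^{O(r^2)}$; both yield the claimed $O^*(\Delta^{O(kr^2)})$ running time.
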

\begin{proof}
We guess the candidates of $r$-grouped dominating sets with at most $k$ units. 
We first pick an arbitrary vertex $v$ and branch $d(v)+1$ cases. One case is that $v$ is contained in $\mathcal{D}$. The vertices in the unit containing $v$ is reachable from $v$ via at most $r-1$ edges. Since the number of such vertices is at most $\Delta^{r-1}$, the choice of the other $r-1$ vertices is at most $\binom{\Delta^{r-1}}{r-1}=\Delta^{O(r^2)}$. Thus the number of candidates of units that contains $v$ is $\Delta^{O(r^2)}$. 
Another case is that $v$ is not contained in $\mathcal{D}$. Then at least one neighbor of $v$ is contained in  $\mathcal{D}$. The number of candidates of units that contain it is also $\Delta^{O(r^2)}$.
Therefore, the total number of candidates of units that dominate $v$ is $\Delta^{O(r^2)}$.
After guessing one unit, we repeatedly pick a non-dominated vertex and branch as above.
The repetition occurs at most $k$ times. Thus, the total running time is $\Delta^{O(kr^2)}$.
\end{proof}

\begin{corollary}
\textsc{$r$-Grouped Dominating Set} belongs to XP when parameterized by $k+r$.
\end{corollary}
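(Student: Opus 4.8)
The plan is to derive this immediately from the preceding theorem. That theorem establishes that \textsc{$r$-Grouped Dominating Set} can be solved in $O^*(\Delta^{O(kr^2)})$ time, where $\Delta$ is the maximum degree of the input graph. The first step is simply to bound the maximum degree by the number of vertices, $\Delta \le n$, which turns the running time into $O^*(n^{O(kr^2)})$. Since the $O^*$ notation hides only a factor that is polynomial in the input size and independent of the parameter, the entire running time is already a pure power of $n$ with an exponent governed by $k$ and $r$.

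The second step is to observe that the exponent $O(kr^2)$ is a computable function of the combined parameter $k+r$ alone; it can, for instance, be bounded by $c(k+r)^3$ for a fixed constant $c$, and it involves no dependence on the input beyond $n$ itself in the base. Hence the total running time is of the form $n^{f(k+r)}$ for a computable function $f$, which is exactly the definition of the class XP. There is essentially no obstacle here: the corollary is a direct reading of the theorem. The only point worth a moment's care is to confirm that all occurrences of the parameters can be folded into a single function of $k+r$, so that the bound genuinely matches the template $n^{f(k+r)}$ rather than leaving any uncontrolled dependence of the exponent on $n$ — which it does not, since both $k$ and $r$ are part of the parameter and $\Delta$ has been absorbed into the base via $\Delta \le n$.
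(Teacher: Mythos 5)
Your proof is correct and follows exactly the route the paper intends: the corollary is stated as an immediate consequence of the preceding theorem, and bounding $\Delta \le n$ to read the running time $O^*(\Delta^{O(kr^2)})$ as $n^{f(k+r)}$ is precisely the implicit argument. Nothing is missing.
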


Tripathi et al.~\cite{TripathiKPPW22} showed that \textsc{Paired Dominating Set} (equivalently, \textsc{$2$-Grouped Dominating Set}) is NP-complete for planar graphs with maximum degree~5. 
We show that \textsc{$r$-Grouped Dominating Set} is NP-hard even on planar bipartite graphs of maximum degree 3 for every fixed $r\ge 1$. This strengthens the result by Tripathi et al.~\cite{TripathiKPPW22}.

\begin{theorem}\label{thm:NP-c:planar}
For every fixed $r\ge 1$, \textsc{$r$-Grouped Dominating Set} is NP-complete on planar bipartite graphs of maximum degree 3.
\end{theorem}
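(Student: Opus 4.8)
The plan is to prove NP-hardness by reduction from a suitable NP-hard problem on planar bipartite cubic graphs, while arranging that the $r$-grouping constraint can always be satisfied cheaply without interfering with the domination condition. The natural source problem is \textsc{Dominating Set} (or a closely related domination variant) on planar graphs of bounded degree, since that already enforces the domination core we need; the gadgets must then (i) preserve planarity, bipartiteness, and maximum degree $3$, and (ii) let every vertex we put into the dominating set be extended to a connected unit of size exactly $r$ by attaching a private ``tail'' of $r-1$ additional vertices, so that grouping into units of size $r$ is free. Membership in NP is immediate, so the bulk of the argument is the hardness construction and its correctness.

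First I would fix the source: take an NP-hard domination-type problem on planar bipartite graphs of maximum degree~$3$ (for $r=1$ this is essentially \textsc{Dominating Set} itself on such graphs, whose NP-hardness is classical). For general fixed $r$, the idea is to attach to each original vertex $v$ a small pendant gadget, a path or short caterpillar $P_v$ of length roughly $r$, whose leaf can only be dominated from within $P_v$. This forces any solution to place part of a unit inside $P_v$; by tuning the gadget I would make it so that exactly the $r$ vertices of a single unit are forced onto each ``selected'' branch, and that a vertex $v$ is effectively chosen into the dominating set precisely when its tail is activated. The key design goal is that each tail both (a) absorbs $r-1$ spare vertices to round a single dominator up to a full unit of size $r$, and (b) contributes no shortcut edges that could dominate original non-tail vertices, so the domination of the original graph is governed solely by the chosen vertices.

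The forward direction (from a small dominating set $D$ in the source to an $(\le k)$-unit $r$-grouped dominating set) is routine: for each $v\in D$ take the unit $\{v\}\cup(\text{first }r-1\text{ vertices of }P_v)$, which is connected and of size $r$, and verify it dominates both the original graph and all tails. The converse direction is where the real work lies: one must argue that any $r$-grouped dominating set, after restriction to the original vertices, yields a dominating set of the source of the prescribed size, using the leaf-forcing to charge the tail vertices against the units and to recover one genuine dominator per tail.

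\textbf{The main obstacle} will be simultaneously honoring all three structural restrictions — planarity, bipartiteness, \emph{and} maximum degree exactly $3$ — while keeping the gadget connected and of the right parity for $r$-unit partitioning. Degree $3$ is tight: a selected vertex $v$ must retain enough degree to dominate its neighbors in the base graph yet also connect to its tail, which can easily push $d(v)$ above $3$; I would resolve this by subdividing edges and routing the tail through a degree-balancing path so that no vertex exceeds degree $3$. Preserving bipartiteness then constrains the tail length and attachment parity, and I expect the trickiest case analysis to be ensuring the construction works \emph{uniformly} for every fixed $r\ge 1$ rather than only for convenient residues of $r$, so that the size-$r$ connectivity requirement never forces a unit to reuse a tail intended for a different base vertex.
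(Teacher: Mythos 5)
Your reduction has a structural flaw that cannot be patched by tuning the gadgets. You attach a pendant tail $P_v$ of length roughly $r$ to \emph{every} vertex $v$ of a \textsc{Dominating Set} instance, relying on the tail leaf to force a unit onto ``selected'' branches. But the leaf of \emph{every} tail must be dominated in \emph{every} solution, and a short computation shows each tail swallows exactly one full unit no matter what: if the tail is $v,t_1,\dots,t_r$, the only connected $r$-sets that can dominate $t_r$ are $\{t_1,\dots,t_r\}$ and $\{v,t_1,\dots,t_{r-1}\}$ (any unit reaching outside the tail is already too large). Two things follow. First, the number of units is forced to be at least $n$ (one per tail), independent of the source budget $k$, so the unit budget cannot encode the dominating-set size --- the quantity your reduction is supposed to transfer. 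Second, whichever of the two unit shapes is used, $v$ is dominated by its own tail unit (by $t_1$ or by itself), so with budget $n$ every base vertex is automatically dominated and \emph{every} instance becomes a yes-instance; the domination structure of the source graph carries no information. Shortening the tail to length $r-1$ makes it worse: then every unit dominating the leaf must contain $v$, forcing all of $V$ into the solution. The lesson is that per-vertex tails uniformly consume units, so the only degree of freedom left is the local ``orientation'' of each tail unit --- a per-gadget binary choice, not a cardinality. A correct reduction must therefore come from a problem whose hardness lives in exactly such local choices, checked by tail-less vertices. This is what the paper does: it reduces from \textsc{Restricted Planar 3-SAT} (each variable in exactly three clauses, at most two positive and two negative occurrences), builds one gadget per variable consisting of $v_{x_i},v_{\bar x_i}$, a selector $y_i$ adjacent to both, and a pendant path of length $r-1$ at $y_i$, sets the budget to exactly $n$ units, and uses an exchange argument to force each tail unit to end at $v_{x_i}$ or $v_{\bar x_i}$; the clause vertices have no tails and are dominated precisely when the induced assignment satisfies $\phi$.

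Two secondary issues: your plan to restore the degree-$3$ bound by ``subdividing edges and routing the tail through a degree-balancing path'' is not sound, since edge subdivision does not preserve domination (a subdivided edge changes who dominates whom and perturbs the budget); in the paper the degree bound falls out of the restricted SAT variant for free ($v_{x_i}$ has at most two clause neighbors plus $y_i$, and $y_i$ has degree exactly $3$). Bipartiteness likewise needs no parity tuning in the paper's construction, since literal vertices, clause vertices, and the path alternate naturally. Your $r=1$ base case (classical hardness of \textsc{Dominating Set} on planar bipartite subcubic graphs) is fine, but for every $r\ge 2$ the core of your construction fails as described above.
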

\begin{proof}
We reduce \textsc{Restricted Planar 3-SAT} to \textsc{$r$-Grouped Dominating Set}.
\textsc{Restricted Planar 3-SAT} is a variant of \textsc{Planar 3-SAT} such that each variable occurs in exactly three clauses, in at most two clauses positively and in at most two clauses negatively.
It is known that \textsc{Restricted Planar 3-SAT} is NP-complete~\cite{MiddendorfP93}. 

Let $\phi$ be an instance of \textsc{Restricted Planar 3-SAT}, $n$ and $m$ be the number of variables and clauses of $\phi$, respectively.
The incidence graph of $\phi$ is a bipartite graph such that it consists of variable vertices $v_{x_i}$'s corresponding to variables and clause vertices $c_j$'s corresponding to clauses.  A variable vertex $v_{x_i}$ is connected to a clause variable $c_j$ if $C_j$ has a literal of $x_i$.
The incidence graph of $\phi$ is planar.

For the incidence graph of $\phi$, we construct the graph $G=(V,E)$ by replacing variable vertices by variable gadgets.
For each variable $x_i$, its variable gadget is constructed as follows. We create three vertices $v_{x_i}, v_{\bar{x}_i}, y_i$, and then add edges $\{v_{x_i}, y_i\}$, $\{v_{\bar{x}_i}, y_i\}$.
Furthermore, we attach a path $P^r_i = y_i z^{(1)}_i z^{(2)}_i\cdots z^{(r-1)}_i$ of length $r-1$ for each $y_i$. Here, we define $z^{(0)}_i = y_i$.
Let $V_X = \{v_{x_i}, v_{\bar{x}_i} \mid i\in \{1,\ldots,n\}\}$ and $V_C = \{c_j \mid j\in \{1,\ldots,m\}\}$.
For each variable $x_i$, $v_{x_i}$ is connected to $c_j$  if $C_j$ has a positive literal of $x_i$, and $v_{\bar{x}_i}$ is connected to $c_j$ if $C_j$ has a negative literal of $x_i$. 
We complete the construction of the graph $G=(V,E)$. Figure \ref{fig:NP-hard_planar} shows a concrete example of $G=(V,E)$ for $\phi$.
Notice that $G$ is bipartite because $V_X$ and  $V_C$ form independent sets, respectively, and $P^r_i$ is a path. Furthermore, $G$ is planar because the incidence graph of $\phi$ and the variable gadgets are planar.
Finally, since each variable occurs in exactly three clauses, in at most two clauses positively and in at most two clauses negatively, the maximum degree of $G$ is at most 3.

\begin{figure}[tbp]
    \centering
    \includegraphics[width=10cm]{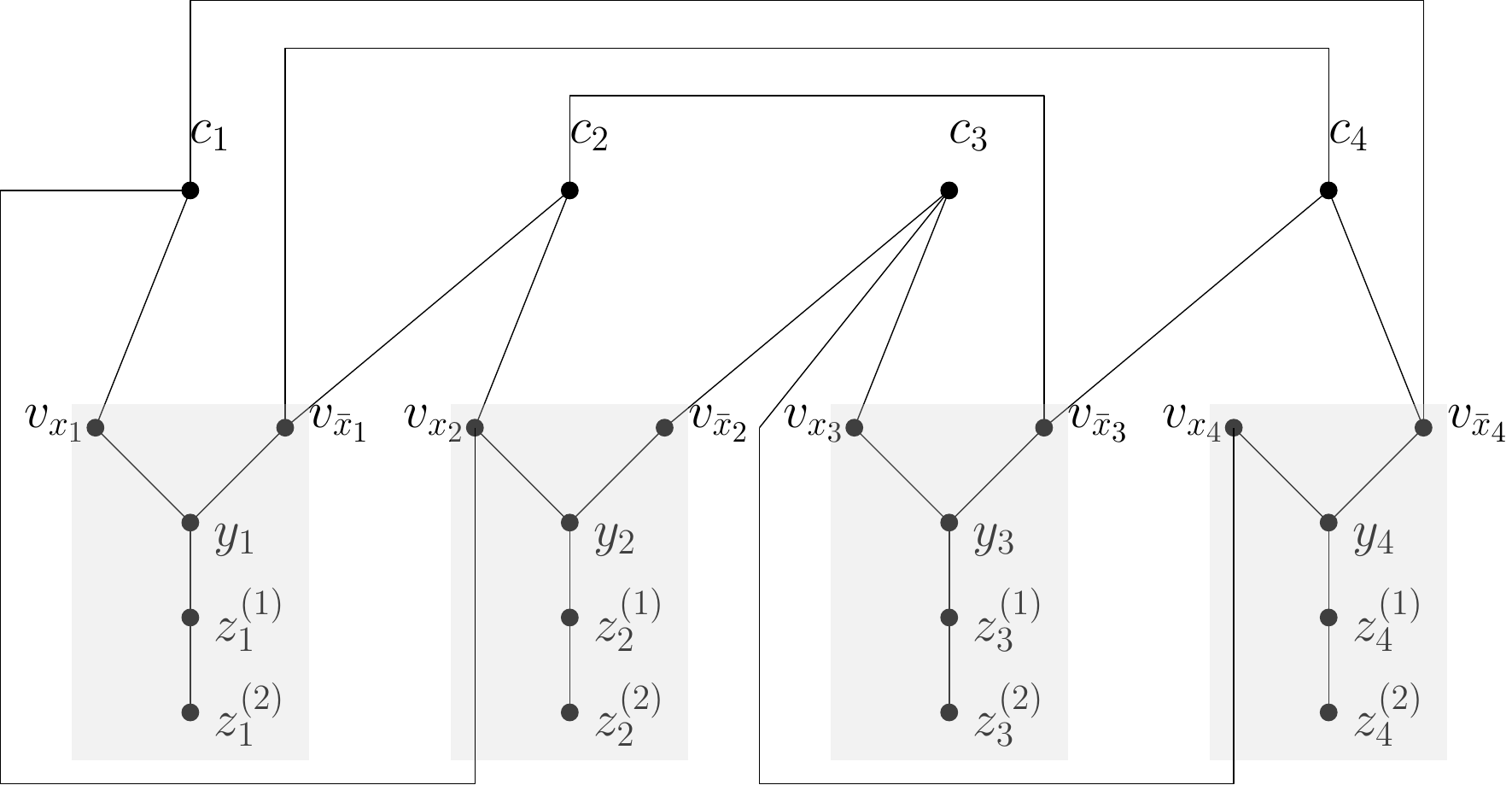}
    \caption{The graph $G$ obtained by the reduction from an instance $\phi = (x_1\lor x_2 \lor \bar{x}_4)(\bar{x}_1\lor x_2\lor \bar{x}_3)(\bar{x}_2\lor x_3\lor x_4)(\bar{x}_1\lor \bar{x}_3\lor \bar{x}_4)$ of \textsc{Restricted Planar 3-SAT} to \textsc{$3$-Grouped Dominating Set}. } 
    \label{fig:NP-hard_planar}
\end{figure}

We are ready to show that $\phi$ is a yes-instance if and only if there is an $r$-grouped dominating set with at most $n$ in $G$. 

Suppose that we are given a truth assignment of  $\phi$.
For each variable $x_i$, we select path $v_{x_i}y_i z^{(1)}_i z^{(2)}_i\cdots z^{(r-2)}_i$ as a unit of an $r$-grouped dominating set if $x_i$ is assigned to true. Otherwise, we select path $v_{\bar{x}_i}y_i z^{(1)}_i z^{(2)}_i\cdots z^{(r-2)}_i$. The number of vertices in each unit is $r$.
The unit of $x_i$ dominates vertex $z^{(r-1)}_i$. Since each clause has at least one truth literal for the truth assignment, each clause vertex $c_j$ is dominated by some unit. Therefore, the set of selected paths is an $r$-grouped dominating set with at most $n$ units.

Conversely, we are given an $r$-grouped dominating set $\mathcal{D}$ with at most $n$ in $G$. For each $i$, $z^{(r-2)}_i$ must be contained in $\bigcup \mathcal{D}$. If not, $z^{(r-1)}_i$ is not dominated because of $r\ge 2$.
Since $P^r_i$ is a path of length $r-1$ and the number of units is $n$, the vertices of a unit are selected from $\{v_{x_i},v_{\bar{x}_i}\} \cup \{y_i, z^{(1)}_i, z^{(2)}_i, \cdots, z^{(r-1)}_i\}$ for each $i$ and the unit forms a path of length $r$.
If vertex $z^{(r-1)}_i$ is contained in $\bigcup \mathcal{D}$, $\bigcup \mathcal{D}$ does not contain $v_{x_i}$ and $v_{\bar{x}_i}$. Thus, we can remove $z^{(r-1)}_i$ from $\bigcup \mathcal{D}$ and add either $v_{x_i}$ or $v_{\bar{x}_i}$ arbitrarily to $\bigcup \mathcal{D}$. Since $v_{x_i}y_i z^{(1)}_i z^{(2)}_i\cdots z^{(r-2)}_i$ is a path of length $r$, this replacement does not collapse the property of $r$-grouped dominating set.
Thus, we can suppose that the path of $i$th unit of $\mathcal{D}$ has either $v_{x_i}$ or $v_{\bar{x}_i}$ as an endpoint.
Since $\bigcup \mathcal{D}$ is a dominating set, each vertex in $V_C$ has at least one vertex in $\mathcal{D}\cap V_X$ as a neighbor.
This implies that the assignment corresponding to the selection of endpoints of units is a truth assignment.
This completes the proof.
\end{proof}

In the proof of Theorem \ref{thm:NP-c:planar}, the size of the constructed graph for $\phi$ is $O(rn+m)$. Thus, we have the following corollary.
\begin{corollary}\label{cor:ETH:hard}
For every fixed $r\ge 1$, \textsc{$r$-Grouped Dominating Set} cannot be solved in time $2^{o(n+m)}$ on bipartite graphs unless ETH fails.
\end{corollary}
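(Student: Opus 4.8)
The plan is to derive the bound from a \emph{linear-size} reduction out of \textsc{3-SAT}, for which ETH forbids subexponential algorithms. The starting point is the standard strengthening of ETH via the Sparsification Lemma: \textsc{3-SAT} admits no $2^{o(n+m)}$-time algorithm, where $n$ and $m$ denote the numbers of variables and clauses of the formula; equivalently, it is already hard in time $2^{o(n)}$ on sparse formulas with $m=O(n)$. The crucial point is that the gadget construction of Theorem~\ref{thm:NP-c:planar} never uses planarity of the input formula in its correctness argument: applied to an \emph{arbitrary} formula $\phi$, it replaces each variable by its gadget and introduces one vertex per clause, and the resulting graph $G$ is still bipartite (the literal vertices, the clause vertices, and each attached path are together properly $2$-colourable), with the satisfiability-equivalence proof carrying over unchanged.

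Concretely, I would first record that, run on a formula with $n$ variables and $m$ clauses, the construction outputs a graph with $\Theta(rn+m)$ vertices and $\Theta(rn+m)$ edges: each variable contributes a gadget of $r+2$ vertices and $O(r)$ edges, and each clause contributes a single vertex together with its three literal edges. For fixed $r$ this size is $\Theta(n+m)$, so the reduction is linear and polynomial-time. Chaining this with the first step, any algorithm deciding \textsc{$r$-Grouped Dominating Set} on bipartite graphs in time $2^{o(|V(G)|+|E(G)|)}$ would decide the satisfiability of $\phi$ in time $2^{o(n+m)}$, contradicting ETH.

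The one subtlety I expect to manage is the gap between \emph{bipartite} and \emph{planar bipartite} instances. Theorem~\ref{thm:NP-c:planar} starts from \textsc{Restricted Planar 3-SAT} precisely to keep the output planar and of maximum degree~$3$; but a reduction from a \emph{planar} source can only transport a $2^{o(\sqrt{n+m})}$ lower bound, since planarizing a general formula costs a quadratic blow-up and, dually, planar SAT is solvable in $2^{O(\sqrt{n+m})}$ time through the $O(\sqrt{n+m})$ treewidth of a planar incidence graph. To reach the stronger $2^{o(n+m)}$ bound I therefore feed the construction with a \emph{general} (non-planar) \textsc{3-SAT} instance and simply drop the planarity demand on $G$, which is harmless because the corollary only claims hardness on bipartite graphs. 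The sole routine verification is that relaxing the source preserves what the corollary needs: bipartiteness and the unit-selection correctness survive, even though the degree bound is forfeited when a variable occurs in many clauses.
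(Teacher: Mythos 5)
Your proof is correct, and it is in fact more careful than the paper's own justification, which consists of a single sentence: the graph constructed in the proof of Theorem~\ref{thm:NP-c:planar} has size $O(rn+m)$, hence the corollary. Read literally, the paper's argument starts from \textsc{Restricted Planar 3-SAT}, and---exactly as you observe---a planar source can only transport a $2^{o(\sqrt{n+m})}$ lower bound, since \textsc{Planar 3-SAT} is solvable in $2^{O(\sqrt{n+m})}$ time via the $O(\sqrt{n+m})$ treewidth of its planar incidence graph, while the planarization of a general formula incurs a quadratic blow-up. Your repair---feeding the same gadget construction a general sparse \textsc{3-SAT} instance (which has no $2^{o(n+m)}$-time algorithm under ETH by the Sparsification Lemma), verifying that the equivalence argument nowhere uses planarity, and noting that bipartiteness survives while the degree-$3$ bound is sacrificed---is precisely what is needed for the $2^{o(n+m)}$ bound, and it is consistent with the corollary claiming hardness only on bipartite graphs rather than planar bipartite ones. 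So the two arguments share the identical reduction and size accounting, but differ in the essential point of which SAT variant is the source: the paper's choice buys planarity and maximum degree~$3$ for the NP-completeness statement but, taken at face value, supports only a square-root-type ETH bound, whereas your choice gives the stated bound at the cost of the degree constraint. Incidentally, you need not forfeit the degree bound: starting from the non-planar restricted variant of \textsc{3-SAT} (each variable occurring in exactly three clauses), obtained from sparse \textsc{3-SAT} by the standard linear-size variable-duplication with implication cycles, preserves $2^{o(n+m)}$-hardness and keeps the output of maximum degree~$3$. One last loose end, present in the paper as well: the converse direction of Theorem~\ref{thm:NP-c:planar}'s argument invokes $r\ge 2$, so for $r=1$ one should instead appeal directly to the known linear-size ETH-hardness of ordinary \textsc{Dominating Set} on bipartite graphs.
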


\section{Fast Algorithms Parameterized by Vertex Cover Number and by Twin Cover Number}
\label{sec:fast-algorithms}

In this section, we present FPT algorithms for \textsc{$r$-Grouped Dominating Set} parameterized by vertex cover number $\nu$.  
Our algorithm is based on dynamic programming on nested partitions of a vertex cover, and its running time is $O^*((2\nu(r+1))^{\nu})$ for general $r\ge 2$.
For the cases of $r \in \{2,3\}$, we can tailor the algorithm to run in $O^*((r+1)^\nu)$ time
by focusing on the fact that the nested partitions of a vertex cover degenerate in some sense. 

We then turn our attention to a more general parameter twin cover number.
We show that, given a twin cover, \textsc{$r$-Grouped Dominating Set} admits an optimal solution 
in which twin-edges do not contribute to the connectivity of $r$-units.
This implies that these edges can be removed from the graph,
and thus we can focus on the resultant graph of bounded vertex cover number.
Hence, we can conclude that
our algorithms still work when the parameter $\nu$ in the running time is replaced with twin cover number $\tau$.

\begin{theorem}
\label{thm:r+tau_fpt}
For graphs of twin cover number $\tau$, 
\textsc{$r$-Grouped Dominating Set} can be solved in $O^*((2\tau(r+1))^{\tau})$ time. For the cases of $r \in \{2,3\}$, it can be solved in $O^*((r+1)^\tau)$ time.
\end{theorem}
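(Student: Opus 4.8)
The plan is to solve the problem first for the vertex cover number $\nu$ and then reduce the twin cover case to it. Throughout I assume $r\ge 2$ (the case $r=1$ is ordinary \textsc{Dominating Set}, which is easy for these parameters). Fix a vertex cover $S$ with $|S|=\nu$, so that $I=V\setminus S$ is independent. Two structural facts localize all difficulty onto $S$: since $I$ has no internal edges and $r\ge 2$, every unit contains at least one vertex of $S$, so any solution has at most $\nu$ units; and once $S^{*}=S\cap\bigcup\mathcal D$ is fixed, a vertex $v\in I$ with $v\notin\bigcup\mathcal D$ is dominated precisely when $N(v)\cap S^{*}\neq\emptyset$ (as $N(v)\subseteq S$). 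Thus the only vertices of $I$ forced into the solution are those with $N(v)\cap S^{*}=\emptyset$, and the remaining $I$-vertices are free to be used for connectivity and for reaching the required unit size.

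First I would design a dynamic program guided by how a solution meets $S$. A unit is encoded by its $S$-part together with the number of $I$-vertices it still needs; an $I$-vertex, being adjacent to possibly several vertices of $S$, can both dominate and connect otherwise-separate $S$-vertices of the same unit. Processing $S$ one vertex at a time, the state records for each vertex of $S$ which unit it lies in (or that it is unused), the sizes of the units so far in $\{0,1,\dots,r\}$, and the partition of each unit's $S$-part into the connected components currently formed; the component partition refines the unit partition, giving the ``nested partition'' structure, and as $I$-vertices are inserted these components merge until each unit becomes a single component of size $r$. Bounding the per-vertex information by $2\nu(r+1)$ yields $(2\nu(r+1))^{\nu}$ states, and each transition reduces to a polynomial-time feasibility test that the needed $I$-vertices can be assigned so that components merge, sizes reach $r$, and every undominated $I$-vertex is covered. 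For $r\in\{2,3\}$ no nontrivial connector chains arise, since each unit's $S$-part is connected either directly or through its single $I$-vertex; the component refinement and the $\nu$ unit labels then become superfluous, the information per vertex of $S$ collapses to one of $r+1$ local roles, and the $I$-vertices are placed by a matching/greedy assignment, giving the improved $(r+1)^{\nu}$ bound.

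Next I would lift this to the twin cover number. Let $S$ be a minimum twin cover, computable in $O^{*}(1.2738^{\tau})$ time; then $G-S$ is a disjoint union of cliques of true twins, each clique $C$ sharing a common neighborhood $N_S(C)\subseteq S$. Cliques with $N_S(C)=\emptyset$ are isolated clique-components, each of which needs exactly one unit (feasible iff $|C|\ge r$); I handle these separately and delete them. On the remainder I delete all twin edges to obtain $G''$, in which $S$ is a vertex cover of size at most $\tau$. The crux is the structural lemma that some optimal solution uses no unit contained in a single clique: if a pure-clique unit $\{v_1,\dots,v_r\}\subseteq C$ occurred, then, since a minimum solution cannot contain a redundant unit, $N_S(C)$ cannot lie entirely in the solution (otherwise $C\cup N_S(C)$ is already dominated and the unit is deletable), so some $w\in N_S(C)$ is outside it; replacing the unit by $\{w,v_1,\dots,v_{r-1}\}$ keeps the unit count, restores connectivity through $w$ with no twin edge, and dominates at least as much (in particular $w$ dominates $v_r$ in $G''$). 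After this normalization every unit is connected in $G''$, and domination transfers both ways: each vertex of $S$ keeps all its neighbors in $G''$, while a clique $C$ is dominated in $G''$ exactly when $N_S(C)$ meets the solution, which the lemma guarantees whenever $C$ is dominated in $G$. Hence solutions of $G''$ and of the non-isolated part of $G$ correspond with equal unit counts, and running the vertex-cover algorithm on $G''$ (with $|S|\le\tau$) and adding the isolated-clique units gives both claimed running times.

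The main obstacle I expect is the vertex-cover dynamic program itself: arranging the nested-partition state so that unit membership, intra-unit connectivity, and unit sizes are tracked simultaneously within the $2\nu(r+1)$ per-vertex budget, and proving that the $I$-vertex assignment at each transition is a genuine polynomial-time feasibility check rather than a hidden hard problem. The degeneration for $r\in\{2,3\}$ also requires care, as one must verify that discarding unit identities and component refinements loses no feasible solution. By comparison the twin cover reduction is clean: its only real content is the one-line exchange in the structural lemma together with the verification that deleting twin edges preserves domination.
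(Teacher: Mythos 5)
Your twin-cover half is essentially the paper's own argument: the exchange lemma you state (no unit entirely inside a twin clique, proved by swapping one clique vertex for an unused $w\in N_S(C)$, whose existence follows from minimality) is exactly Lemma~\ref{lem:twin-K}, and your subsequent deletion of twin edges to obtain a graph of vertex cover number at most $\tau$ is the paper's reduction; you even patch the corner case of clique components with $N_S(C)=\emptyset$, which the paper glosses over. That part is fine.

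The gap is in the vertex-cover dynamic program, which is the actual content of the theorem, and it is precisely the step you yourself flag as ``the main obstacle.'' Your plan processes the cover $S$ and then disposes of the independent-set vertices by ``a polynomial-time feasibility test that the needed $I$-vertices can be assigned'' (or, for $r\in\{2,3\}$, ``a matching/greedy assignment''). This cannot work as stated: the $I$-vertices chosen into the solution must simultaneously (a) dominate the cover vertices in $J\setminus N[J_D]$, (b) merge fragments of several units into connected sets, and (c) fill unit sizes, while their number is \emph{minimized}. Constraint (a) alone is a \textsc{Set Cover} instance with universe inside $J$ and sets $N(v)\cap J$ for $v\in I$ (already for $r=2$: minimize $|I'|$ with $Y\subseteq N(I')$ and $I'\cup X$ perfectly matchable), so no polynomial per-transition check, matching, or greedy rule can decide it, and the coupling with (b) and (c) across units makes a batch assignment even less tractable. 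The paper resolves this by inverting the roles: the table $A[X,\mathcal{F}^{(r-1)},\dots,\mathcal{F}^{(2)},Y,j]$ is indexed by a nested partition of the cover \emph{together with a prefix index $j$ of $I$}, and each $I$-vertex is processed exactly once as a DP stage, deciding atomically (among at most $2^{|J|}$ options) which fragments it joins or merges and which still-undominated cover vertices it removes from $Y$ -- a Liedloff-style subset DP that absorbs the set-cover hardness into the $Y$-coordinate of the state. Without this (or an equivalent) mechanism your state space of size $(2\nu(r+1))^{\nu}$ has no valid transition rule, and likewise the $(r+1)^{\nu}$ bound for $r\in\{2,3\}$ in the paper comes from explicit recurrences over $(X,Y,j)$ and $(X,F,Y,j)$, not from discarding the $I$-side bookkeeping. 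So the proposal identifies the right state-space shape but is missing the key idea that makes the algorithm exist.
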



With a simple observation, 
Theorem~\ref{thm:r+tau_fpt} implies that \textsc{$r$-Grouped Dominating Set} parameterized solely by $\tau$ is fixed-parameter tractable.
\begin{corollary}
\label{cor:tau^tau}
For graphs of twin cover number $\tau$, 
\textsc{$r$-Grouped Dominating Set} can be solved in $O^*((2\tau)^{2\tau})$ time.
\end{corollary}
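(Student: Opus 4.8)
The plan is to derive the $\tau$-only bound from Theorem~\ref{thm:r+tau_fpt} by a case distinction on the size of $r$ relative to $\tau$, using the elementary identity $(2\tau)^{2\tau} = ((2\tau)^2)^{\tau} = (4\tau^2)^{\tau}$.

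First I would dispatch the regime $r + 1 \le 2\tau$. Here $2\tau(r+1) \le 2\tau \cdot 2\tau = 4\tau^2$, so the running time guaranteed by Theorem~\ref{thm:r+tau_fpt} satisfies $(2\tau(r+1))^{\tau} \le (4\tau^2)^{\tau} = (2\tau)^{2\tau}$, and simply running that algorithm already proves the claim in this regime. Hence all the content lies in the complementary case $r \ge 2\tau$, where the factor $(r+1)^{\tau}$ appearing in Theorem~\ref{thm:r+tau_fpt} can be far larger than $(2\tau)^{2\tau}$ and must be circumvented.

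For $r \ge 2\tau$ the plan is to retain the combinatorial skeleton of the dynamic program of Theorem~\ref{thm:r+tau_fpt} but to stop tracking exact partial sizes. Recall that, after deleting the (provably irrelevant) twin edges, $G - S$ is a disjoint union of cliques grouped by their neighborhood type in the twin cover $S$, and these cliques form an abundant, interchangeable reservoir of vertices. The idea is that the state should record only data of magnitude $(2\tau)^{O(\tau)}$ — the nested partition of $S$ into the connected cores of the units, which cover vertices are selected, and which cover vertices are already dominated — while the size counter ranging over $\{0,\dots,r\}$ is collapsed to a bounded \emph{size class} such as $\min(\text{current size},\,2\tau)$. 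The requirement that each unit have size exactly $r$ and that every twin-clique be dominated would then be enforced \emph{after} the skeleton is fixed, by a polynomial-time feasibility subroutine over the clique reservoir that exploits the interchangeability of vertices inside one clique. Since there are only $(2\tau)^{O(\tau)} \le (2\tau)^{2\tau}$ skeletons and each is checked in polynomial time, the overall bound is $O^*((2\tau)^{2\tau})$.

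The main obstacle is exactly the correctness of this decoupling in the large-$r$ regime. A single unit may legitimately have to host domination representatives of many distinct cliques, so its \emph{essential} size can approach $r$ rather than $O(\tau)$; one must therefore argue that, once the combinatorial skeleton is fixed, selecting the remaining vertices so as to simultaneously meet the exact-size-$r$ constraint, the connectivity of each unit, and the domination of all cliques reduces to a polynomially solvable problem. I would attack this through an exchange argument that normalizes an arbitrary optimal solution so that it agrees with one skeleton explored by the capped program, combined with a flow model whose sources are the units (with demand $r$ and admissible clique types dictated by the skeleton) and whose sinks are the cliques (with capacities equal to their sizes and a covering requirement encoding domination). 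Establishing integrality of this flow and, above all, verifying that the chosen clique attachments genuinely keep each unit connected is the delicate step on which the whole large-$r$ case rests.
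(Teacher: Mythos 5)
Your first case ($r+1 \le 2\tau$) is exactly the paper's argument and is correct: Theorem~\ref{thm:r+tau_fpt} directly gives $(2\tau(r+1))^{\tau} \le (4\tau^{2})^{\tau} = (2\tau)^{2\tau}$. The genuine gap is the complementary case. Your capped-DP-plus-flow scheme is not a proof: you yourself defer its three load-bearing steps (the normalization/exchange argument aligning an optimal solution with a capped skeleton, the integrality of the flow model, and the verification that the clique attachments keep each unit connected), and the difficulty you flag is real --- a single unit may need essential size close to $r$, so truncating the size counter at $2\tau$ discards information that your feasibility subroutine would then have to reconstruct. None of this machinery is needed, because in the large-$r$ regime the instance \emph{collapses structurally}.

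The paper's observation for $r \ge 2\tau - 1$ is that every connected component $C$ with $|V(C)| \ge r$ admits a connected dominating set of size \emph{exactly} $r$, so one unit per component suffices and the answer is yes if and only if every component has at least $r$ vertices and the number of components is at most $k$. Concretely: take a non-empty twin cover $D$ of $C$ with $|D| \le \tau$ (a single vertex if $C$ is complete); $D$ dominates $C$, since a vertex all of whose incident edges are twin edges forces its component to be complete. If $C[D]$ is disconnected, repeatedly add a vertex adjacent to at least two components of $C[D]$ (such a vertex exists because $D$ dominates connected $C$); after at most $\tau - 1$ additions, $C[D]$ is connected with $|D| \le 2\tau - 1 \le r$. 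Finally pad with arbitrary vertices up to size $r$ --- connectivity is preserved since $D$ dominates $C$. The whole case costs $O^{*}(1.2738^{\tau})$ time (to find the twin cover) plus polynomial time, well within the bound. The lesson: the $(r+1)^{\tau}$ factor you tried to compress away never needs to be paid when $r \ge 2\tau - 1$, because no dynamic programming is required there at all; looking for this kind of structural trivialization is the right move before redesigning the DP state space.
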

\begin{proof}
If $r < 2\tau -1$, then the problem can be solved in $O^*((2\tau)^{2\tau})$ time by Theorem~\ref{thm:r+tau_fpt}.
Assume that $r \ge 2\tau-1$.
Let $C$ be a connected component of the input graph. If $|V(C)| < r$, then we have a trivial no-instance.
Otherwise, we construct a connected dominating set $D$ of $C$ with size exactly $r$,
which works as a unit dominating $C$.
We initialize $D$ with a non-empty twin cover of size at most $\tau$. 
Note that such a set can be found in $O^*(1.2738^{\tau})$ time:
if $C$ is a complete graph, then we pick an arbitrary vertex $v \in V(C)$ and set $D = \{v\}$;
otherwise, just find a minimum twin cover.
Since $C$ is connected, $D$ is a dominating set of $C$.
If $C[D]$ is not connected, we update $D$ with a new element $v$ adjacent to at least two connected components of $C[D]$.
Since $|D| \le \tau$ at the beginning, we can repeat this update at most $\tau -1$ times,
and after that $C[D]$ becomes connected and $|D| \le 2\tau - 1 \le r$.
We finally add $r - |D|$ vertices arbitrarily and obtain a desired set.
\end{proof}

In the next subsection, we first present an algorithm for \textsc{2-Grouped Dominating Set} parameterized by vertex cover number, which gives a basic scheme of our dynamic programming based algorithms.  We then see how we extend the idea to \textsc{3-Grouped Dominating Set}. As explained above, these algorithms are based on dynamic programming (DP), and they compute certain function values on partitions of a vertex cover.   
Unfortunately, it is not obvious how to extend the strategy to general $r$. Instead, we consider nested partitions of a vertex cover for DP tables, which makes the running time a little slower though. In the last subsection, we see how a vertex cover can be replaced with a twin cover in the same running time in terms of order. 

\subsection{Algorithms parameterized by vertex cover number}
\subsubsection{Algorithm for \textsc{2-Grouped Dominating Set}}
We first present an algorithm for the simplest case $r=2$, i.e., the paired dominating set. 
Let $G=(V,E)$ be a graph and $J$ be a vertex cover of $G$.  Then, $I=V\setminus J$ is an independent set. The basic scheme of our algorithm follows the algorithm for the dominating set problem by Liedloff~\cite{liedloff}, which focuses on a partition of a given vertex cover $J$. For a minimum dominating set $D$, the vertex cover $J$ is partitioned into three parts: $J\cap D$; $(J\setminus D)\cap N(J\cap D)$, that is, the vertices in $J\setminus D$ that are dominated by $J\cap D$; and $J\setminus N[J\cap D]$, that is, the remaining vertices. Note that the remaining vertices in $J\setminus N[J\cap D]$ are dominated by $I\cap D$. Once $J\cap D$ is fixed, a minimum $I\cap D$ is found by solving the set cover problem that reflects the condition that $J\setminus N[J\cap D]$ must be dominated by $I\cap D$. The algorithm computes a minimum dominating set by solving set cover problems defined by all candidates of $J\cap D$. 

To adjust the algorithm to \textsc{2-Grouped Dominating Set}, we need to handle the condition that a dominating set contains a perfect matching.  

For each subset $J_D\subseteq J$, we find a subset $I_D\subseteq I$ (if any exists) of the minimum size such that $J_D \cup I_D$ can form a 2-grouped dominating set. Let $X$ and $Y$ be disjoint subsets of $J$, and let $I=\{v_1,v_2,\ldots,v_{|I|}\}$ (see Fig.~\ref{fig:pd}). 
\begin{figure}[tb]
 \centering
 \includegraphics[scale=0.18]{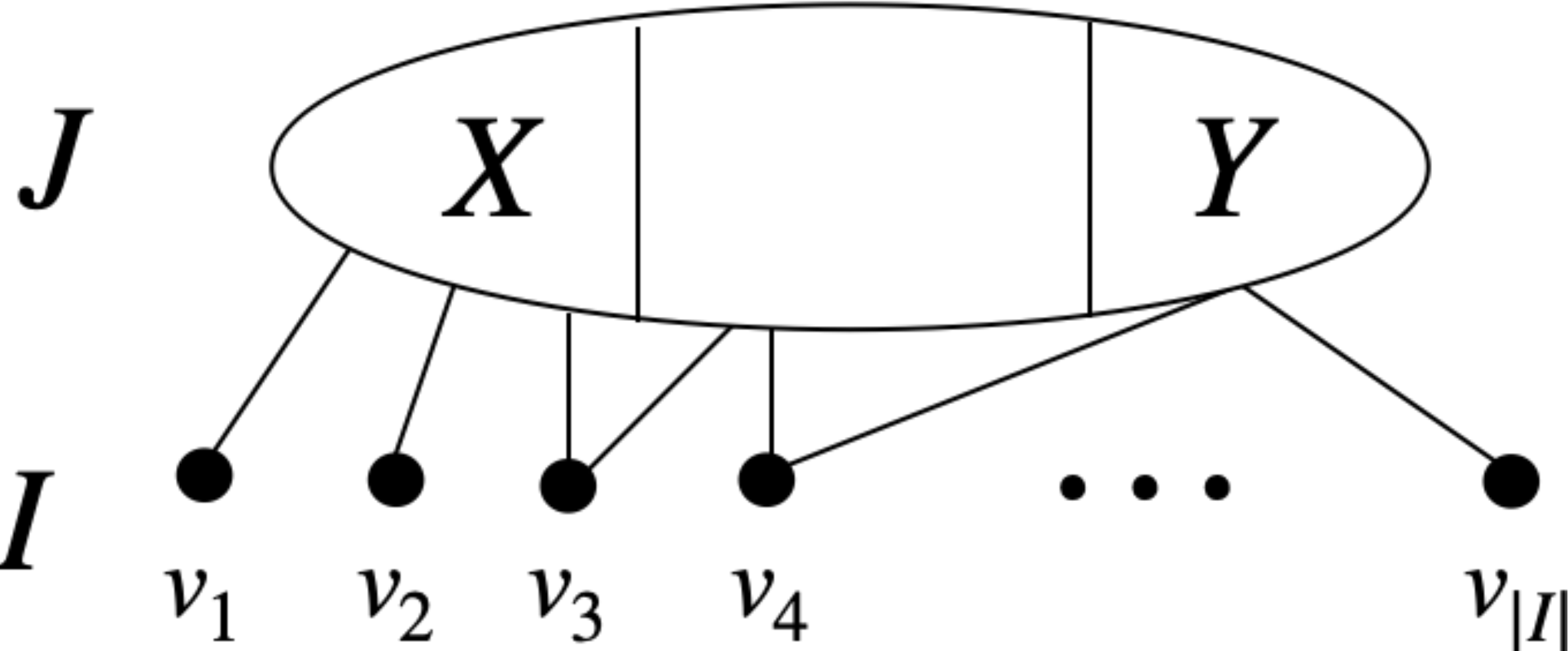}
 \caption{Partitioning a vertex cover into three parts.}
 \label{fig:pd}
\end{figure}
For $j=0, \ldots,|I|$, we define an auxiliary table $\OPT[X,Y,j]$ as the minimum size of $I'\subseteq \{v_1,v_2,\ldots,v_j\}$ that satisfies the following conditions.
\begin{enumerate}
    \item $Y\subseteq N(I')$, 
    \item $I'\cup X$ has  a partition  $\mathcal D^{(2)}=\{D^{(2)}_{1},D^{(2)}_{2},\ldots,D^{(2)}_{p}\}$ with $p\le k$ such that for all $i=1,\ldots,p$, $|D^{(2)}_{i}|=2$ and $G[D^{(2)}_{i}]$ is connected.
\end{enumerate}
We set $\OPT[X,Y,j]=\infty$ if no $I'\subseteq \{v_1,v_2,\ldots,v_j\}$ satisfies the conditions.
We can easily compute $\OPT[X,Y,0] \in \{0, \infty\}$ as
$\OPT[X,Y,0] = 0$ if and only if $G[X]$ has a perfect matching and $Y = \emptyset$.
Now the following recurrence formula computes $\OPT$:
\begin{align*}
\OPT[X,Y,j+1] & = \min \left\{
     \OPT[X,Y,j], \ 
     \underset{u\in N(v_{j+1}) \cap X}{\min} \OPT[X \setminus \{ u \},Y \setminus N(v_{j+1}),j]+1
\right\}.
\end{align*}
The recurrence finds the best way under the condition that we can use vertices from $v_1,v_2,\ldots,v_j, v_{j+1}$ in a dominating set: not using $v_{j+1}$, or pairing $v_{j+1}$ with $u \in N(v_{j+1}) \cap X$.
We can compute all entries of $\OPT$ in $O^*(3^{|J|})$ time in a DP manner as there are only $3^{|J|}$ ways
for choosing disjoint subsets $X$ and $Y$ of $J$.

Now we compute the minimum number of units in a $2$-grouped dominating set of $G$ (if any exists) by looking up some appropriate table entries of $\OPT$.
Let $\mathcal{D}$ be a 2-grouped dominating set of $G$ with $J_{D} = J \cap \bigcup \mathcal{D}$ and $I_{D} = I \cap \bigcup \mathcal{D}$.
Since $\bigcup \mathcal{D}$ is a dominating set with no isolated vertex in $G[\bigcup \mathcal{D}]$,
$J_{D}$ dominates all vertices in $I$. Let $J_Y = J \setminus N[J_D]$. 
Then the definition of $\OPT$ implies that $\OPT[J_D,J_Y,|I|] = |I_{D}|$.
Conversely, if $X \subseteq J$ dominates $I$, $Y = J \setminus N[X]$, and $\OPT[X, Y, |I|] \ne \infty$,
then there is a $2$-grouped dominating set with $(|X| + \OPT[X, Y, |I|])/2$ units.
Therefore, the minimum number of units in a $2$-grouped dominating set of $G$ is $\min\{(|X| + \OPT[X, J \setminus N[X],|I|])/2 \mid X \subseteq J \text{ and }  I \subseteq N(X)\}$, 
which can be computed in $O^{*}(2^{|J|})$ time given the table $\OPT$. 
Thus the total running time of the algorithm is $O^{*}(3^{|J|})$.

\subsubsection{Algorithm for \textsc{3-Grouped Dominating Set}}
Next, we consider the case $r=3$, i.e., the trio dominating set. Let $G=(V,E)$ be a graph, $J$ be a vertex cover of $G$, and $I=V\setminus J$. The basic idea is the same as the case $r=2$ except that we partition the vertex cover into four parts in the DP, and thus the recurrence formula for $\OPT$ is different.
In the DP, the vertex cover $J$ is partitioned into four parts depending on the partial solution corresponding to each table entry.

For each subset $J_D\subseteq J$, we find a subset $I_D\subseteq I$ (if any exists) of the minimum size such that $J_D \cup I_D$ can form a 3-grouped dominating set.
Intuitively, the set $F$ represents partial units that will later be completed to full units.
Let $X$, $F$, and $Y$ be disjoint subsets of $J$, and let $I=\{v_1,v_2,\ldots,v_{|I|}\}$.  
For $j=0,\ldots,|I|$, we define $\OPT[X,F,Y,j]$ as the minimum size of $I'\subseteq \{v_1,v_2,\ldots,v_j\}$ that satisfies the following conditions: 
\begin{enumerate}
    \item $Y\subseteq N(I')$, 
    \item $I'$ can be partitioned into two parts $I'_2,I'_3$ satisfying the following conditions: 
    \begin{itemize}
        \item $I'_2\cup F$ has a partition $\mathcal D^{(2)}=\{D^{(2)}_1,D^{(2)}_2,\ldots,D^{(2)}_p\}$ with $p\le k$ such that for all $i=1,\ldots,p$, $|D^{(2)}_{i}|=2$ and $G[D^{(2)}_{i}]$ is connected.
        \item $I'_3\cup X$ has a partition $\mathcal D^{(3)}=\{D^{(3)}_1,D^{(3)}_2,\ldots,D^{(3)}_q\}$ with $q\le k$ such that for all $i=1,\ldots,q$, $|D^{(3)}_{i}|=3$ and $G[D^{(3)}_{i}]$ is connected.
    \end{itemize}
\end{enumerate}
We set $\OPT[X,F,Y,j]=\infty$ if no $I'\subseteq \{v_1,v_2,\ldots,v_j\}$ satisfies the conditions.
We can easily compute $\OPT[X,F,Y,0] \in \{0, \infty\}$ as
$\OPT[X,F,Y,0] = 0$ if and only if $F = Y = \emptyset$ and 
$G[X]$ admits a partition into connected graphs of $3$-vertices.
The last condition can be checked in $O(2^{|J|} \cdot |J|^{3})$ time for all $X \subseteq J$
by recursively considering all possible ways for removing three vertices from $X$;
that is, $\OPT[X,\emptyset,\emptyset,0] = \min_{\{x,y,z\} \in \binom{X}{3}} A[X \setminus \{x,y,z\}, \emptyset, \emptyset, 0]$ if $|X| \ge 3$.
The following recurrence formula holds: 
$\OPT[X,F,Y,j+1] = \min \{f_1, f_2, f_3, f_4\}$, where
\begin{align*}
  f_1 &= \OPT[X,F,Y,j],
  \\
  f_2 &= \min_{\alpha,\beta\in X,|E(G[\{\alpha,\beta,v_{j+1}\}])|\geq 2} \OPT[X\setminus\{\alpha,\beta\},F, Y\setminus \{N(\{\alpha,\beta,v_{j+1}\})\},j]+1, 
  \\
  f_3 &= \min_{\alpha\in X\cap N(v_{j+1}) } \OPT[X\setminus\{\alpha\},F\cup \{\alpha\},Y\setminus N(v_{j+1}),j]+1,
  \\
  f_4 &= \min_{\beta\in F\cap N(v_{j+1})} \OPT[X,F\setminus \{\beta\},Y\setminus N(\{\beta,v_{j+1}\}),j]+1.
\end{align*}
The four options $f_{1},f_{2},f_{3}$, and $f_{4}$ assume different ways of the role of $v_{j+1}$
and compute the optimal value under the assumptions (see Fig.~\ref{fig:tdre}). 
Concretely, $f_{1}$ reflects the case when $v_{j+1}$ does not belong to the solution, and 
$f_{2}$ reflects the case when $v_{j+1}$ belongs to the solution together with two vertices in $J$ in a connected way. In $f_{3}$, it reflects that $v_{j+1}$ forms a triple in the solution with a vertex in $F$ and a vertex in $I_j$. In $f_{4}$, it reflects that $v_{j+1}$ currently forms a pair in $J$ and will form a triple with a vertex in $I\setminus I_j$.  
We can compute all entries of $\OPT$ in $O^*(4^{|J|})$ time as the number of combinations of three disjoint sets $X,F,Y$ of $J$ is $4^{|J|}$.

\begin{figure}[tb]
 \centering
 \includegraphics[scale=0.16]{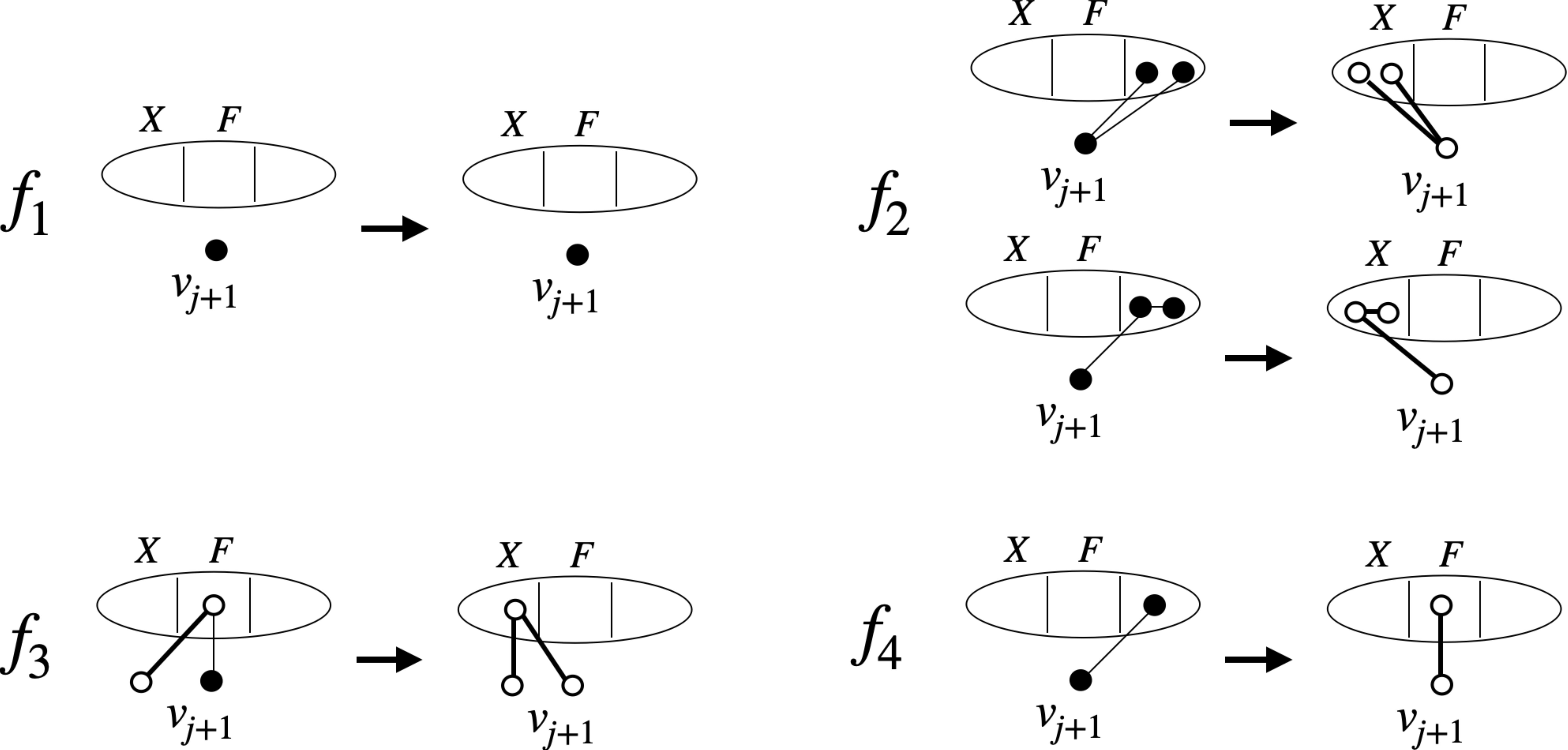}
 \caption{How $v_{j+1}$ is used. (The white vertices belong to a dominating set.)}
 \label{fig:tdre}
\end{figure}

Similarly to the previous case of $r = 2$, we can compute the minimum number of units
in a $3$-grouped dominating set as $\min\{(|X| + \OPT[X, \emptyset, J \setminus N[X],|I|])/3 \mid X \subseteq J \text{ and }  I \subseteq N(X)\}$. 
Given the table $\OPT$, this can be done in $O^{*}(2^{|J|})$ time.
Thus the total running time of the algorithm is $O^{*}(4^{|J|})$.

\subsubsection{Algorithm for \textsc{$r$-Grouped Dominating Set}}
We now present our algorithm for general $r\ge 4$. Let $G=(V,E)$ be a graph, $J$ be a vertex cover of $G$, and $I=V\setminus J$.
This case still allows an algorithm based on a similar framework to the previous cases, though connected components of general $r$ can be built up from smaller fragments of connected components; this yields an essential difference that worsens the running time.
In the DP, the vertex cover $J$ is partitioned into $r+1$ parts depending on the partial solution corresponding to each table entry,
and then some of the parts in the partition are further partitioned into smaller subsets.
In other words, each table entry corresponds to a nested partition of the vertex cover.

As in the previous algorithms, for each subset $J_D\subseteq J$, we find a subset $I_D \subseteq I$ (if any exists) of the minimum size such that $J_D \cup I_D$ can form an $r$-grouped dominating set.
Let $X$, $F^{(r-1)},\dots,F^{(3)},F^{(2)},Y$ be disjoint subsets of $J$, and let $I=\{v_1,v_2,\ldots,v_{|I|}\}$. 
For $i=2,\ldots,r-1$, let $\mathcal{F}^{(i)}$ be a partition of $F^{(i)}$, where $\mathcal{F}^{(i)}=\{F^{(i)}_1,F^{(i)}_2,\ldots,F^{(i)}_{|\mathcal{F}^{(i)}|}\}$.
The number of such nested partitions $(X,\mathcal{F}^{(r-1)},\dots,\mathcal{F}^{(2)},Y)$ is at most $(r+1)^{|J|}|J|^{|J|}$.
\begin{figure}[tb]
 \centering
 \includegraphics[scale=0.20]{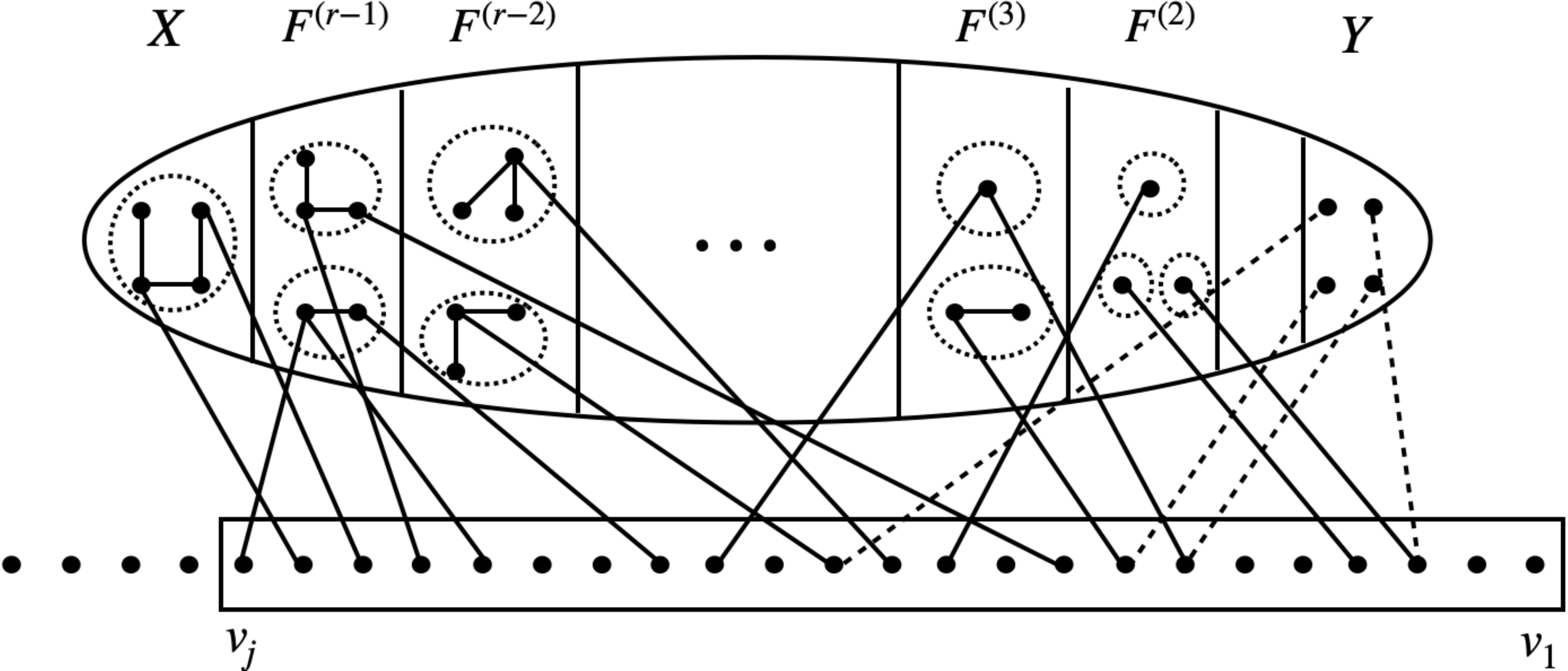}
 \caption{A nested partition of a vertex cover.}
 \label{fig:rd}
\end{figure}
For $j=0,\ldots,|I|$, we define $\OPT[X,\mathcal{F}^{(r-1)},\dots,\mathcal{F}^{(2)},Y,j]$ as the minimum size of $I'\subseteq \{v_1,v_2,\ldots,v_j\}$ that satisfies the following conditions: 
\begin{enumerate}
    \item $Y\subseteq N(I')$, 
    \item $I'$ can be partitioned into $r-1$ parts $I'_2,I'_3,\dots,I'_r$ satisfying the following conditions: 
    \begin{itemize}
        \item for $i=2,\ldots,r-1$, $I'_i \cup F^{(i)}$ has a partition $\mathcal D^{(i)}=\{D^{(i)}_1,D^{(i)}_2,\ldots,D^{(i)}_{|\mathcal{F}^{(i)}|}\}$ such that 
    for all $p=1,\ldots, |\mathcal{F}^{(i)}|$, $D^{(i)}_p$ includes at least one vertex of $I'$ and is a superset of ${F}^{(i)}_p$, and $|D^{(i)}_{p}|=i$ and $G[D^{(i)}_{p}]$ is connected.
        \item $I'_r\cup X$ has a partition $\mathcal D^{(r)}=\{D^{(r)}_1,D^{(r)}_2,\ldots,D^{(r)}_{q}\}$ such that for all $i=1,\ldots,q$, $|D^{(r)}_{i}|=r$ and $G[D^{(r)}_{i}]$ is connected.
    \end{itemize}
\end{enumerate}
We set $\OPT[X,\mathcal{F}^{(r-1)},\dots,\mathcal{F}^{(2)},Y,j] = \infty$ if no $I'\subseteq \{v_1,v_2,\ldots,v_j\}$ satisfies the conditions.
We can compute $\OPT[X,\mathcal{F}^{(r-1)},\dots,\mathcal{F}^{(2)},Y,0]$, which is $0$ or $\infty$,
as it is $0$ if and only if ${F}^{(r-1)} = \dots = {F}^{(2)} = Y = \emptyset$ and 
$G[X]$ admits a partition into connected graphs of $r$ vertices.
The last condition can be checked in $O(|J|^{|J|})$ time for all $X \subseteq J$
by checking all possible partitions of $J$.

Assume that all entries of $\OPT$ with $j \le c$ for some $c$ are computed. Since the degree of $v_{c+1}$ is at most $|J|$,
the number of possible ways of how $v_{c+1}$ extends a partial solution is at most $2^{|J|}$.
Thus from each table entry of $\OPT$ with $j = c$,
we obtain at most $2^{|J|}$ candidates of the table entries with $j = c+1$.
Thus, we can compute all entries of $\OPT$ in $O^{*}(2^{|J|}(r+1)^{|J|}|J|^{|J|})$ time.

Given $\OPT$, we can compute the minimum number of units in an $r$-grouped dominating set as $\min\{(|X| + \OPT[X, \emptyset, \dots, \emptyset, J \setminus N[X],|I|])/r \mid X \subseteq J \text{ and }  I \subseteq N(X)\}$.
Again this takes only $O^{*}(2^{|J|})$ time.
Thus the total running time of the algorithm is $O^{*}(2^{|J|}(r+1)^{|J|}|J|^{|J|})=O^{*}((2\nu(r+1))^{\nu})$.

\subsection{Algorithms parameterized by twin cover number}
In this subsection, we show that the algorithms presented above still work when the parameter $\nu$ in the running time is replaced with twin cover number $\tau$. To show this, we prove the following lemma. It says that twin-edges do not contribute to the connectivity of units for some minimum $r$-grouped dominating sets and can be removed from the graph. As a result, the vertex cover number can be replaced with the twin cover number. 
\begin{lemma}\label{lem:twin-K}
Let $G$ be a graph and $K$ be a twin cover of $G$.
If $G$ has an $r$-grouped dominating set, 
then there exists a minimum $r$-grouped dominating set such that every unit has at least one vertex in $K$. 
\end{lemma}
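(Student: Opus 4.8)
The plan is to exploit the rigid structure of $V \setminus K$. Since $K$ is a twin cover, every edge inside $V \setminus K$ is a twin edge, so $V \setminus K$ partitions into \emph{twin classes}, each a clique of mutually true twins, and two vertices in distinct twin classes are non-adjacent (an edge between two vertices of $V\setminus K$ is a twin edge, forcing them into the same class). Moreover, for a vertex $v$ in a twin class $T$ we have $N(v) = (T \setminus \{v\}) \cup N_K(T)$, where $N_K(T) := N(T) \cap K$ is the common set of $K$-neighbours shared by all of $T$ (every neighbour of $v$ outside $T$ must lie in $K$). First I would record the key consequence: any unit $D^{\star}$ contained in $V \setminus K$ (call such a unit \emph{internal}) must lie in a single twin class $T$, because $G[D^{\star}]$ is connected while distinct twin classes are mutually non-adjacent; in particular $N[D^{\star}] = N[T] = T \cup N_K(T)$.

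Next I would make an extremal choice: among all minimum $r$-grouped dominating sets, take one, $\mathcal{D}$, minimising the number of internal units, and write $S = \bigcup \mathcal{D}$. Assuming $r \ge 2$ (the regime in which the lemma is applied), I would suppose toward a contradiction that some internal unit $D^{\star} \subseteq T$ exists and split on $N_K(T)$. If $N_K(T)$ contains a vertex $w \notin S$, I would replace $D^{\star} = \{t_1, \dots, t_r\}$ by $\{w\} \cup (D^{\star} \setminus \{t_1\})$: this set has size $r$, is connected (as $w$ is adjacent to all of $T$), is disjoint from the rest of $\mathcal{D}$, and dominates everything $D^{\star}$ did, because the retained twin $t_2$ satisfies $N[t_2] = N[t_1] = N[T]$. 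The result is a minimum solution with strictly fewer internal units, contradicting the choice of $\mathcal{D}$. If instead $N_K(T) \subseteq S$, I would argue $D^{\star}$ is outright redundant: every $u \in N_K(T) \subseteq K$ lies in a unit other than $D^{\star}$ and so is dominated by $S \setminus D^{\star}$, while every $t \in T$ is dominated by any fixed $u_0 \in N_K(T) \subseteq S$; hence $N[D^{\star}] = N[T] \subseteq N[S \setminus D^{\star}]$, and $\mathcal{D} \setminus \{D^{\star}\}$ is an $r$-grouped dominating set with fewer units, contradicting minimality.

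The remaining case, $N_K(T) = \emptyset$, is where I expect the only real difficulty: then $T$ is a connected component of $G$ that is a clique disjoint from $K$, and an internal unit inside it can neither be rerouted through $K$ nor deleted, so the statement genuinely requires such a $T$ to be dealt with outside this exchange (for instance under the harmless assumption that $G$ has no component that is a clique avoiding $K$, or by handling such components separately, exactly as the complete components are treated in the proof of Corollary~\ref{cor:tau^tau}). Modulo this degenerate case, iterating the two reductions above drives the number of internal units to zero, which is precisely the claim. The crux of the argument is the observation that an entirely occupied $N_K(T)$ forces redundancy rather than an awkward re-matching: this is what lets us either swap $D^{\star}$ against a single free $K$-vertex or delete $D^{\star}$ altogether, so that no unit other than $D^{\star}$ is ever disturbed and connectivity of the other units is never at risk.
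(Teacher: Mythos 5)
Your proof takes essentially the same route as the paper's: a unit avoiding $K$ consists of mutual true twins, and you either swap one of its vertices (keeping a twin, hence needing $r\ge 2$) for a common $K$-neighbour outside the solution, or, when every common $K$-neighbour is already used, delete the unit as redundant against minimality, iterating (your extremal choice of minimizing internal units is just a cleaner form of the paper's ``repeat this process''). Your explicit handling of the case $N(T)\cap K=\emptyset$ is in fact a point where you are more careful than the paper: its proof silently assumes $K_{D}\neq\emptyset$ (the claims that no vertex is dominated only by $D$ and that $D$ is dominated by vertices of $K_{D}$ both fail otherwise), and for a clique component disjoint from $K$ the conclusion of Lemma~\ref{lem:twin-K} as stated does not hold, so such components must indeed be treated separately, as you propose and as is done for complete components in the proof of Corollary~\ref{cor:tau^tau}.
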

\begin{proof}
Let $G=(V,E)$ be a graph, and $K$ be a twin cover of $G$. Suppose that a minimum $r$-grouped dominating set $\mathcal D$ exists and one of its units $D=\{v_1,v_2,\ldots,v_r\}$ has no vertex in $K$. Since $K$ is a twin cover, $N[v_1]=N[v_2]=\dots=N[v_r]$ holds. Let $K_{D} = K\cap N(v_1)$. Then, there is at least one vertex $x$ in $K_{D}$ such that $x \notin \bigcup \mathcal D$. Suppose to the contrary that there is no such $x$, and thus all vertices in $K_{D}$ belong to $\bigcup \mathcal D$. This implies that no vertex is dominated only by $D$ and that $D$ itself is dominated by some vertices in $K_{D}$. Thus, $\mathcal{D} \setminus \{D\}$ is an $r$-grouped dominating set. 
This contradicts the minimality of $\mathcal{D}$.
Let $D' = D\setminus\{v_1\}\cup \{x\}$, then $\mathcal D'=\mathcal D\setminus \{D\}\cup \{D'\}$ is also a minimum $r$-grouped dominating set of $G$ (see Fig. \ref{fig:tclem}). By repeating this process, we can obtain a minimum $r$-grouped dominating set such that every unit has at least one vertex in $K$.
\end{proof}
\begin{figure}[h]
 \centering
 \includegraphics[scale=0.15]{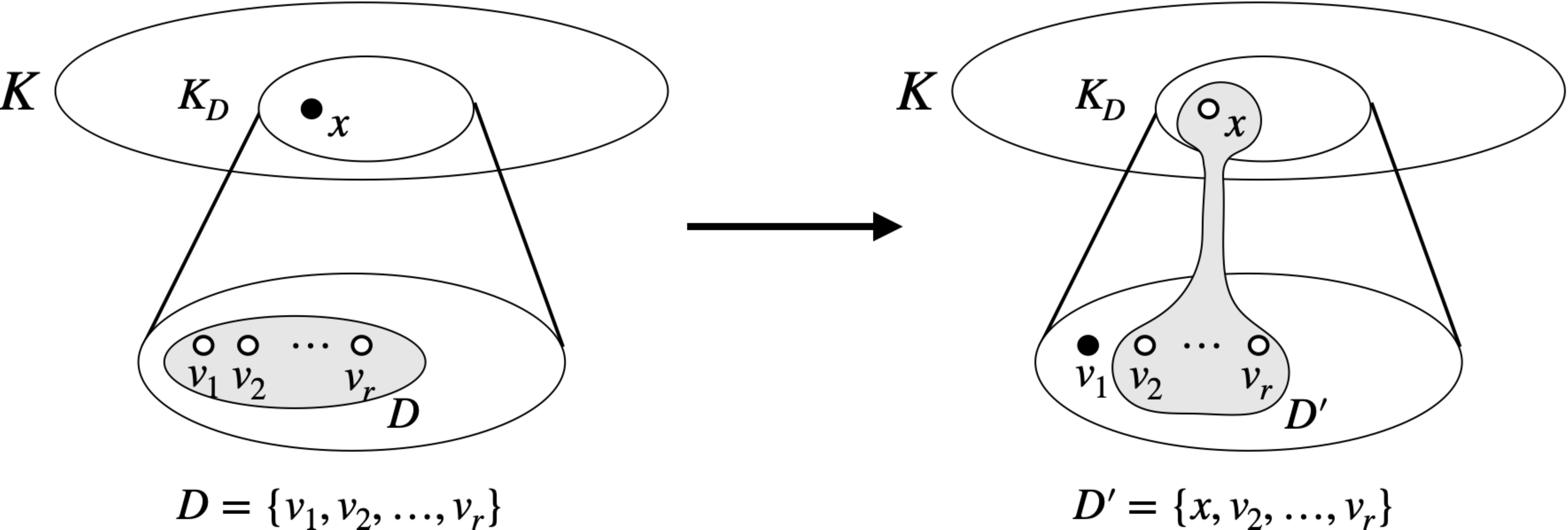}
 \caption{An example for exchange. White vertices belong to a dominating set. }
 \label{fig:tclem}
\end{figure}



\section{Beyond Vertex Cover and Twin Cover}

In this section,
we further explore the parameterized complexity of $r$-\textsc{Grouped Dominating Set}
with respect to structural graph parameters that generalize vertex cover number and twin cover number.
We show that if we do not try to optimize the running time of algorithms,
then we can use known algorithmic meta-theorems that 
automatically give fixed-parameter algorithms
parameterized by certain graph parameters.

For the sake of brevity, we define only the parameters for which we need their definitions.
For example, we do not need the definition of treewidth for applying the meta-theorem described below.
On the other hand, to contrast the results here with the ones in the previous sections,
it is important to see the picture of the relationships between the parameters.
See \figurename~\ref{fig:parameters0} for the hierarchy of the graph parameters we deal with.

Roughly speaking, the algorithmic meta-theorems we use here say that 
if a problem can be expressed in a certain logic (e.g., \fo{}, \mso{1}, or \mso{2}),
then the problem is fixed-parameter tractable parameterized by a certain graph parameter (e.g., twin-width, treewidth, or clique-width).
Such theorems are extremely powerful and used widely for designing fixed-parameter algorithms~\cite{Kreutzer11}.
On the other hand, the generality of the meta-theorems unfortunately comes with very high dependency on the parameters~\cite{FrickG04}.
When our target parameter is vertex cover number, the situation is slightly better,
but still a double-exponential $2^{2^{\mathrm{\Omega}(\nu)}}$ lower bound of the parameter dependency is known under ETH~\cite{Lampis12}.
This implies that our ``slightly superexponential'' $2^{O(\tau \log \tau)}$ algorithm in Section~\ref{sec:fast-algorithms}
cannot be obtained by applications of known meta-theorems.

In the rest of this section, we first introduce \fo{}, \mso{1}, and \mso{2} on graphs.
We then observe that the problem can be expressed in \fo{} when $r$ and $k$ are part of the parameter
and in \mso{2} when $r$ is part of the parameter.
These observations combined with known meta-theorems immediately imply that
$r$-\textsc{Grouped Dominating Set} is fixed-parameter tractable when
\begin{itemize}
  \item parameterized by $r+ k$ on nowhere dense graph classes;

  \item parameterized by $r + k + \textrm{twin-width}$
  if a contraction sequence of the minimum width is given as part of the input; and

  \item parameterized by $r + \text{treewidth}$.
\end{itemize}

We then consider the parameter $k + \text{treewidth}$
and show that this case is intractable. More strongly, we show that 
$r$-\textsc{Grouped Dominating Set} is W[1]-hard
when the parameter is $k + \text{treedepth} + \text{feedback vertex set number}$.

We finally consider the parameter modular-width, a generalization of twin cover number,
and show that $r$-\textsc{Grouped Dominating Set} parameterized by modular-width is fixed-parameter tractable.

\subsection{Results based on algorithmic meta-theorems}
\label{sec:logic}
The \emph{first-order logic} on graphs (\fo) allows variables representing vertices of the graph under consideration.
The atomic formulas are the equality $x=y$ of variables and the adjacency $E(x,y)$ meaning that $\{x,y\} \in E$.
The \fo{} formulas are defined recursively from atomic formulas
with the usual Boolean connectives ($\lnot$, $\land$, $\lor$, $\Rightarrow$, $\Leftrightarrow$),
and quantification of variables ($\forall$, $\exists$).
We also use the existential quantifier with a dot ($\dot{\exists}$) to quantify distinct objects.
For example, $\dot{\exists} a,b \colon \phi$ means $\exists a,b \colon (a \ne b) \land \phi$.
We write $G \models \phi$ if $G$ satisfies (or \emph{models}) $\phi$.
Given a graph $G$ and an \fo{} formula $\phi$, 
\textsc{\fo{} Model Checking} asks whether $G \models \phi$.

It is straightforward to express the property of having an $r$-grouped dominating set of $k$ units 
with an \fo{} formula whose length depends only on $r+k$:
\begin{align*}
  \phi_{r,k}
  &=
  \dot{\exists} v_{1}, v_{2}, \dots, v_{rk} \colon   \\
  &\qquad
  \mathsf{dominating}(v_{1}, \dots, v_{rk}) \land {}
  \bigwedge_{0 \le i \le k-1} 
  \mathsf{connected}(v_{ir+1}, \dots, v_{ir+r}),
\end{align*}
where $\mathsf{dominating}(\cdot\cdot\cdot)$ is a subformula expressing that the $rk$ vertices form a dominating set
and $\mathsf{connected}(\cdot\cdot\cdot)$ is the one expressing that the $r$ vertices induce a connected subgraph
(see Section~\ref{sec:subformulas} for the expressions of the subformulas).
This implies that \textsc{$r$-Grouped Dominating Set} parameterized by $r + k$
is fixed-parameter tractable on graph classes on which 
\textsc{\fo{} Model Checking} parameterized by the formula length $|\phi|$ is fixed-parameter tractable.
Such graph classes include
nowhere dense graph classes~\cite{GroheKS17} and 
graphs of bounded twin-width (given with so called contraction sequences)~\cite{BonnetKTW22}.
\begin{corollary}
\label{cor:r+k_nwd}
$r$-\textsc{Grouped Dominating Set} parameterized by $r+k$ is fixed-parameter tractable
on nowhere dense graph classes.
\end{corollary}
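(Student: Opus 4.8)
The plan is to obtain the corollary as an immediate consequence of the first-order expressibility established above, combined with the model-checking meta-theorem for nowhere dense classes. The essential equivalence is already in place: a graph $G$ has an $r$-grouped dominating set with at most $k$ units if and only if $G \models \phi_{r,k}$. So the only point I would need to verify carefully is that $|\phi_{r,k}|$ is bounded by a function of $r+k$ alone, after which the result follows by invoking the meta-theorem.

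To confirm the length bound, I would inspect the two subformulas. The formula $\mathsf{dominating}(v_1,\dots,v_{rk})$ is simply $\forall u \colon \bigvee_{1 \le i \le rk}(u = v_i \lor E(u,v_i))$, which has length $O(rk)$. The subformula $\mathsf{connected}(v_{ir+1},\dots,v_{ir+r})$ requires more care, since connectivity of an induced subgraph is not \fo{}-expressible in general. Here I would exploit that $r$ is a constant: a set of exactly $r$ vertices induces a connected subgraph precisely when every bipartition of these vertices into two nonempty parts has an edge crossing it. As there are only $2^{r-1}-1$ such bipartitions, this is a finite Boolean combination of adjacency atoms whose length depends only on $r$. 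Substituting both subformulas then yields $|\phi_{r,k}| = g(r+k)$ for some computable function $g$.

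With the length bound secured, I would apply the theorem of Grohe, Kreutzer, and Siebertz~\cite{GroheKS17}, which states that \textsc{\fo{} Model Checking} is fixed-parameter tractable parameterized by formula length on every nowhere dense graph class. Running their algorithm on $G$ and $\phi_{r,k}$ decides $G \models \phi_{r,k}$ in time $f(|\phi_{r,k}|) \cdot n^{O(1)} = f(g(r+k)) \cdot n^{O(1)}$, which is fixed-parameter tractable in $r+k$, establishing the corollary.

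The main obstacle is conceptual rather than computational: recognizing that, although connectivity lies outside the expressive power of \fo{} in general, it becomes expressible once the number of vertices involved is a fixed constant, so that the entire property remains in \fo{} with formula length independent of the graph. Everything else is a direct instantiation of the meta-theorem.
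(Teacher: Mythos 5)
Your proposal is correct and takes essentially the same route as the paper: you express the property via the \fo{} formula $\phi_{r,k}$ of length bounded in $r+k$ (handling $\mathsf{connected}$ on a fixed number $r$ of vertices through the finitely many bipartitions, which is exactly the idea the paper sketches for its \fo{} version of $\mathsf{connected}$ in the subformulas section) and then invoke the model-checking theorem of Grohe, Kreutzer, and Siebertz~\cite{GroheKS17} for nowhere dense classes. There is nothing substantive to add or correct.
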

\begin{corollary}
\label{cor:r+k+twin-width}
$r$-\textsc{Grouped Dominating Set} parameterized by $r + k + \textrm{twin-width}$ is fixed-parameter tractable
if a contraction sequence of the minimum width is given as part of the input.
\end{corollary}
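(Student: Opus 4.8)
The plan is to combine the first-order expressibility of the problem, already captured by the formula $\phi_{r,k}$ displayed above, with the algorithmic meta-theorem for \textsc{\fo{} Model Checking} on graphs of bounded twin-width due to Bonnet, Kim, Thomass\'e, and Watrigant~\cite{BonnetKTW22}. First I would note that a graph $G$ has an $r$-grouped dominating set with $k$ units if and only if $G \models \phi_{r,k}$, and that by construction the length $|\phi_{r,k}|$ is bounded by a function of $r+k$ alone: the subformula $\mathsf{dominating}$ is a single universally quantified statement over the $rk$ named vertices, while each $\mathsf{connected}$ subformula concerns only a fixed set of $r$ vertices and is therefore expressible as a bounded disjunction over the finitely many connected graphs on $r$ labelled vertices, requiring exactly the corresponding adjacencies among the named vertices.

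Next I would invoke the meta-theorem: \textsc{\fo{} Model Checking} can be decided in time $f(|\phi|, d)\cdot n^{O(1)}$ on an $n$-vertex graph that is supplied together with a contraction sequence of width $d$, where $d$ equals the twin-width of $G$. Instantiating this with $\phi = \phi_{r,k}$ and $d$ the twin-width of $G$, and using that $|\phi_{r,k}|$ depends only on $r+k$, yields a running time of the form $g(r+k, d)\cdot n^{O(1)}$ for some computable function $g$. This is precisely an FPT bound with respect to the combined parameter $r + k + \textrm{twin-width}$. The hypothesis that a minimum-width contraction sequence is given as part of the input is exactly the object required by the meta-theorem of~\cite{BonnetKTW22}, which is needed because no FPT algorithm for computing such a sequence (or even deciding twin-width) is known; hence the caveat in the statement matches the caveat in the underlying theorem.

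The only point needing care, and the closest thing to an obstacle, is verifying that $\phi_{r,k}$ is a genuine \fo{} sentence of length depending only on $r+k$, in particular that $\mathsf{connected}$ on a fixed number of vertices is \fo{}-expressible without any recourse to transitive closure. This is discharged by the bounded-disjunction encoding referenced in Section~\ref{sec:subformulas}; once it is in hand, the corollary is a direct substitution into the cited theorem.
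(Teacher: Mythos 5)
Your proposal is correct and matches the paper's own route exactly: the corollary is obtained there, too, by noting that $\phi_{r,k}$ is an \fo{} sentence of length depending only on $r+k$ (with $\mathsf{connected}$ on a fixed set of $r$ named vertices handled by a bounded \fo{} encoding, as in Section~\ref{sec:subformulas}) and then invoking the \textsc{\fo{} Model Checking} meta-theorem of Bonnet, Kim, Thomass\'e, and Watrigant~\cite{BonnetKTW22}, which requires the contraction sequence as part of the input. Your care in checking that connectivity is \fo{}-expressible without transitive closure, via a bounded disjunction over the finitely many connected adjacency patterns on $r$ labelled vertices, is precisely the point the paper also flags.
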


The \emph{monadic second-order logic} on graphs (\mso{1})
is an extension of \fo{} that additionally allows variables representing vertex sets
and the inclusion predicate $X(x)$ meaning that $x \in X$.
\mso{2} is a further extension of \mso{1} that also allows edge variables, edge-set variables,
and an atomic formula $I(e,x)$ representing the edge-vertex incidence relation.
Given a graph $G$ and an \mso{1} (\mso{2}, resp.) formula $\phi(X)$ with a free set variable $X$, 
\mso{1} (\mso{2}, resp.) \textsc{Optimization} asks to find a minimum set $S$ such that $G \models \phi(S)$.

It is not difficult to express the property of a vertex set being the union of $r$-units of a $r$-grouped dominating set
with an \mso{2} formula whose length depending only on $r$:\footnote{%
Note that there is no equivalent \mso{1} formula of length depending only on $r$.
This is because $G \models \psi_{2}(V)$ expresses the property of having a perfect matching,
for which an \mso{1} formula does not exist (see e.g., \cite{CourcelleE12}).}
\begin{align*}
  \psi_{r}(X)
  ={}&
  \mathsf{dominating}(X) \land {}
  \\
  &\left(\exists F \subseteq E \colon \mathsf{span}(F,X) \land
  (\forall C \subseteq X\colon
  \mathsf{cc}(F,C) \Rightarrow \mathsf{size}_{r}(C))\right),
\end{align*}
where $\mathsf{dominating}(X)$ is a subformula expressing that $X$ is a dominating set,
$\mathsf{span}(F,X)$ is the one expressing that $X$ is the set of all endpoints of the edges in $F$,
$\mathsf{cc}(F,C)$ expresses that $C$ is the vertex set of a connected component of the subgraph induced by $F$, and
$\mathsf{size}_{r}(C)$ means that $C$ contains exactly $r$ elements
(again, see Section~\ref{sec:subformulas} for the expressions of the subformulas).
Since \textsc{\mso{2} Optimization} parameterized by treewidth
is fixed-parameter tractable~\cite{ArnborgLS91,BoriePT92,Courcelle90mso1},
we have the following result.
\begin{corollary}
\label{cor:r+tw}
$r$-\textsc{Grouped Dominating Set} parameterized by $r + \text{treewidth}$ is fixed-parameter tractable.
\end{corollary}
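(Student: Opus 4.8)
The plan is to read off the answer directly from the \mso{2} formula $\psi_{r}(X)$ constructed above, using the optimization form of Courcelle's theorem. Concretely, I would run \textsc{\mso{2} Optimization} on the input graph $G$ with the formula $\psi_{r}$ to compute a set $S$ of minimum cardinality with $G \models \psi_{r}(S)$, reporting infeasibility if no such set exists. Since every connected component certified inside $\psi_{r}$ is forced to have exactly $r$ vertices, any satisfying $S$ has $|S|$ divisible by $r$, and $|S|/r$ equals the number of units; hence $G$ has an $r$-grouped dominating set with at most $k$ units if and only if the returned minimum set satisfies $|S| \le rk$. This final comparison is done in constant time once $S$ is computed.

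The correctness of this reduction rests on the equivalence that $G \models \psi_{r}(S)$ if and only if $S = \bigcup \mathcal{D}$ for some $r$-grouped dominating set $\mathcal{D}$ of $G$, which I would verify in both directions. For the forward direction, given $\mathcal{D} = \{D_{1}, \dots, D_{p}\}$, each $G[D_{i}]$ is connected on $r$ vertices and hence has a spanning tree; taking $F$ to be the union of these spanning trees makes $X = \bigcup \mathcal{D}$ exactly the set of endpoints of $F$, so that $\mathsf{span}(F,X)$ holds, and the connected components of $(V,F)$ are precisely the $D_{i}$, each of size $r$. Conversely, if $G \models \psi_{r}(S)$ with witness $F$, then the connected components of the subgraph induced by $F$ partition $S$ into connected pieces each of size exactly $r$; since $S$ is additionally dominating, these pieces form the units of an $r$-grouped dominating set. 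The only degenerate case is $r = 1$, where units are single vertices and no edges are needed; this is just ordinary \textsc{Dominating Set}, which is itself \mso{1}-expressible, so I would dispatch it separately.

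To guarantee that the combined parameter is $r + \text{treewidth}$ rather than something larger, I would confirm that $|\psi_{r}|$ depends only on $r$. All of $\mathsf{dominating}$, $\mathsf{span}$, and $\mathsf{cc}$ have length independent of $r$ (connectivity with respect to an edge set is \mso{2}-expressible by a fixed-length formula), while $\mathsf{size}_{r}(C)$, which asserts $|C| = r$, has length linear in $r$; hence $|\psi_{r}| = O(r)$. Because \textsc{\mso{2} Optimization} parameterized by treewidth is fixed-parameter tractable, it runs in time $f(|\psi_{r}|, w) \cdot n^{O(1)}$ for some computable $f$, where $w$ denotes the treewidth of $G$. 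Substituting $|\psi_{r}| = O(r)$ yields a bound of the form $g(r, w) \cdot n^{O(1)}$, which is exactly fixed-parameter tractability for the parameter $r + \text{treewidth}$.

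Since the computation is entirely delegated to the meta-theorem, the only genuine content is the formula's correctness, and I expect the main obstacle to lie in the forward direction: verifying that an arbitrary $r$-grouped dominating set can be \emph{certified} by a single edge set $F$ in the precise sense demanded by $\psi_{r}$. This is resolved by the spanning-forest argument above, which converts the ``induced subgraph is connected'' condition on each unit into the existence of an edge subset whose components have the prescribed size $r$.
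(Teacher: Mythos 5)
Your proposal is correct and follows essentially the same route as the paper: express membership in the union of units via the \mso{2} formula $\psi_r(X)$ of length $O(r)$ (only $\mathsf{size}_r$ depends on $r$), invoke the fixed-parameter tractability of \textsc{\mso{2} Optimization} parameterized by treewidth, and compare the minimum $|S|$ against $rk$. Your spanning-forest verification of the equivalence between satisfying assignments of $\psi_r$ and $r$-grouped dominating sets, and your separate dispatch of the degenerate case $r=1$ (where $\mathsf{span}(F,X)$ would force every solution vertex to lie on an edge of $F$), merely flesh out details the paper leaves implicit, so there is no substantive difference in approach.
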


\subsection{Hardness parameterized by $k + \text{treewidth}$}
Now the natural question regarding treewidth and $r$-\textsc{Grouped Dominating Set}
would be the complexity parameterized by $k + \text{treewidth}$.
Unfortunately, this case is W[1]-hard even if treewidth
is replaced with a possibly much larger parameter $\text{pathwidth} + \text{feedback vertex set number}$
and the graphs are restricted to be planar.
Furthermore, if the planarity is not required, 
we can replace pathwidth in the parameter with treedepth.
\begin{theorem}
$r$-\textsc{Grouped Dominating Set} parameterized by 
$k + \text{pathwidth} + \text{feedback vertex set number}$ is W[1]-hard on planar graphs.
\end{theorem}
\begin{proof}
Given a graph $G = (V,E)$ and an integer $r \ge 2$,
\textsc{Equitable Connected Partition} asks whether there exists
a partition of $V$ into $k = |V|/r$ sets $V_{1}, \dots, V_{k}$
such that $G[V_{i}]$ is connected and $|V_{i}| = r$ for $1 \le i \le k$.
It is known that \textsc{Equitable Connected Partition}
parameterized by $k + \text{pathwidth} + \text{feedback vertex set number}$
is W[1]-hard even on planar graphs~\cite{EncisoFGKRS09}.
We reduce this problem to ours.

Let $\langle G = (V,E), r \rangle$ be an instance of \textsc{Equitable Connected Partition}.
To each vertex $v$ of $G$, we attach a new vertex of degree~$1$, which we call a \emph{pendant} at $v$.
This modification does not change the feedback vertex number
and may increase the pathwidth by at most $1$ (see e.g., \cite[Lemma~A.2]{BelmonteHKKKKLO22}).
Let $H$ be the resultant graph, which is planar.
To prove the lemma, it suffices to show that 
$\langle H, k \rangle$ is a yes-instance of $r$-\textsc{Grouped Dominating Set}
if and only if
$\langle G, r \rangle$ is a yes-instance of \textsc{Equitable Connected Partition}.

To prove the if direction, assume that
$\langle G, r \rangle$ is a yes-instance of \textsc{Equitable Connected Partition}
and that $V_{1}, \dots, V_{k}$ certificate it.
Clearly, $\{V_{1}, \dots, V_{k}\}$ is an $r$-grouped dominating set of $H$.

To prove the only-if direction, assume that
$H$ has an $r$-grouped dominating set $\mathcal{D}$ with at most $k$ units.
Let $v \in V$ and $p$ be the pendant at $v$.
Observe that $\bigcup \mathcal{D}$ contains exactly one of $v$ and $p$
since it needs at least one of them for dominating $p$
and $|\bigcup \mathcal{D}| \le rk = |V|$.
Furthermore, the assumption $r \ge 2$ implies that
$\bigcup \mathcal{D}$ cannot contain $p$ as it has no neighbor other than $v$.
This implies that $\bigcup \mathcal{D} = V$ and that $\mathcal{D}$ contains exactly $|V|/r = k$ units.
Therefore, the family $\mathcal{D}$ is a certificate that $\langle G, r \rangle$
is a yes-instance of \textsc{Equitable Connected Partition}.
\end{proof}

It is known that on general (not necessarily planar) graphs,
\textsc{Equitable Connected Partition}
parameterized by $k + \text{treedepth} + \text{feedback vertex set number}$
is W[1]-hard~\cite{GimaO22_arxiv}.
Since adding pendants to all vertices increases treedepth by at most $1$ (see e.g., \cite{NesetrilO2012}),
the same reduction shows the following hardness.
\begin{theorem}
\label{thm:k+td+fvs}
$r$-\textsc{Grouped Dominating Set} parameterized by 
$k + \text{treedepth} + \text{feedback vertex set number}$ is W[1]-hard.
\end{theorem}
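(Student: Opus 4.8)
The plan is to reuse verbatim the pendant-addition reduction from the preceding theorem and simply swap in a stronger hardness result for the source problem. First I would recall the equivalence already established there: writing $H$ for the graph obtained from an instance $G$ of \textsc{Equitable Connected Partition} by attaching one pendant to every vertex, the instance $\langle H, k \rangle$ is a yes-instance of $r$-\textsc{Grouped Dominating Set} if and only if $\langle G, r \rangle$ is a yes-instance of \textsc{Equitable Connected Partition}, where $k = |V(G)|/r$. The combinatorial correctness is unchanged---the pendants force $\bigcup \mathcal{D} = V(G)$ and pin the number of units at exactly $k$---so no new gadget is required.

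What distinguishes this theorem from the previous one is the choice of source hardness and the bookkeeping on the three parameters. I would invoke that, on general (not necessarily planar) graphs, \textsc{Equitable Connected Partition} parameterized by $k + \text{treedepth} + \text{feedback vertex set number}$ is W[1]-hard \cite{GimaO22_arxiv}. It then suffices to check that the pendant construction keeps this combined parameter bounded in terms of that of $G$: the value $k$ is literally unchanged; attaching degree-one vertices neither creates nor destroys a cycle, so the feedback vertex set number of $H$ equals that of $G$; and adding pendants increases treedepth by at most $1$, since each pendant can always be appended as a leaf below its unique neighbor in any elimination forest (see e.g.\ \cite{NesetrilO2012}). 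Thus each parameter of $H$ is bounded by the corresponding parameter of $G$ plus a constant, which is exactly what a parameterized reduction demands.

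The only real pitfall is bookkeeping rather than mathematics: one must measure the parameter on the output graph $H$, not on $G$, and one must invoke the treedepth variant on general graphs rather than the planar pathwidth variant used in the previous proof. Once the three bounds above are in place and polynomial-time computability of the reduction is noted (it clearly is), W[1]-hardness transfers directly from \textsc{Equitable Connected Partition}, completing the proof.
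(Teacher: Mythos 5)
Your proposal is correct and takes essentially the same route as the paper: the paper also reuses the pendant-addition reduction unchanged, swaps in the W[1]-hardness of \textsc{Equitable Connected Partition} parameterized by $k + \text{treedepth} + \text{feedback vertex set number}$ on general graphs~\cite{GimaO22_arxiv}, and observes that attaching pendants increases treedepth by at most $1$~\cite{NesetrilO2012}. Your additional bookkeeping (that $k$ and the feedback vertex set number are unchanged, and that the reduction runs in polynomial time) is accurate and merely makes explicit what the paper leaves implicit from the preceding proof.
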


\subsection{Fixed-parameter tractability parameterized by modular-width}

Let $G = (V,E)$ be a graph. A set $M \subseteq V$ is a \emph{module} if
for each $v \in V \setminus M$, either $M \subseteq N(v)$ or $M \cap N(v) = \emptyset$ holds.
The \emph{modular-width} of $G$, denoted $\mw(G)$, is the minimum integer $k$
such that either $|V| \le k$ or
there exists a partition of $V$ into at most $k$ modules $M_{1}, \dots, M_{k'}$ of $G$
such that each $G[M_{i}]$ has modular-width at most $k$.
It is known that the modular-width of a graph
and a recursive partition certificating it can be computed in linear time~\cite{CournierH94,HabibP10,TedderCHP08}.

Observe that if $V$ is partitioned into modules $M_{1}, \dots, M_{k}$ of $G$,
then for two distinct modules $M_{i}$ and $M_{j}$, we have either no or all possible edges between them.
If there are all possible edges between $M_{i}$ and $M_{j}$,
then we say that $M_{i}$ and $M_{j}$ are adjacent.

\begin{theorem}
\label{thm:mw}
$r$-\textsc{Grouped Dominating Set} parameterized by modular-width is fixed-parameter tractable.
\end{theorem}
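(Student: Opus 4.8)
The plan is to design a dynamic programming algorithm over the recursive modular decomposition tree of $G$, where at each node we have a graph $G'$ (the quotient) whose vertices are modules $M_1,\dots,M_{k'}$ with $k' \le \mw(G)$, and each module recursively has modular-width at most $\mw(G)$. The key observation driving the whole approach is that the units of an $r$-grouped dominating set interact with the module structure in a limited way: because any two adjacent modules are completely joined, a unit's connectivity can be certified using inter-module edges very freely, whereas domination of a module depends only on whether \emph{some} vertex of an adjacent module (or the module itself) is selected. First I would formalize a compact ``interface'' summarizing, for a subsolution restricted to the vertices inside one module $M_i$, exactly the information the outside needs: how many units lie entirely inside $M_i$, the multiset of \emph{partial} unit sizes (fragments of size $1,\dots,r-1$) that protrude out of $M_i$ and must be completed using vertices of adjacent modules, and whether $M_i$ as a whole is already dominated internally or still needs an external dominator.

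The main steps, in order, are as follows. First I would define for each module a table indexed by (i) the number of complete units placed inside, (ii) a vector $(a_1,\dots,a_{r-1})$ counting how many partial fragments of each size the module contributes to units that will be finished across module boundaries, and (iii) a Boolean flag recording whether every vertex of the module is dominated by the current internal selection. Crucially, since all partial fragments of a given size behave identically from the outside (adjacent modules see the whole module uniformly), the fragment count is what matters, not which specific vertices; and each fragment count is bounded by $r-1$ times the number of units, but for the combining step I only need counts up to $r$, so the interface has size bounded by a function of $r$ and $\mw(G)$. Second, at a quotient node with modules $M_1,\dots,M_{k'}$, I would combine the children's interfaces: complete the cross-module units by merging fragments from adjacent modules into groups of total size $r$ that induce connected subgraphs (using the complete bipartite joins between adjacent modules to guarantee connectivity once each side contributes at least one selected vertex), and verify domination of every module either internally or via a selected vertex in an adjacent module. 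The number of ways to assemble cross-module units is governed by how fragments from the $k'$ modules are matched, and since $k' \le \mw(G)$ and fragment sizes are at most $r$, this combining step runs in time bounded by $f(r,\mw(G)) \cdot \mathrm{poly}(n)$.

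The hard part will be handling the connectivity requirement for units that span several modules together with the possibility that an entire module is a leaf consisting of a large independent or clique-like set — I must argue that a unit never needs to ``enter and leave'' a module multiple times in a way that defeats the fragment-count abstraction. Concretely, the obstacle is showing that an optimal solution can be normalized so that within any single module a unit occupies a contiguous, connectivity-respecting block whose only relevant external signature is its size, so that representing the module's contribution by the \emph{vector of fragment sizes} loses no information. I expect to resolve this by an exchange argument analogous to Lemma~\ref{lem:twin-K}: because adjacent modules are fully joined, I can freely reroute the connecting edges of a multi-module unit so that each module contributes one connected fragment, and because vertices inside a module are mutually indistinguishable to the outside (they share the same neighborhood outside the module), I can relabel which specific vertices realize a fragment of a given size. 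Once normalization is established, correctness of the DP follows by a standard induction on the decomposition tree, and the running time is the product of the per-node interface size and the combining cost, both bounded by a computable function of $r$ and $\mw(G)$ times a polynomial in $n$, yielding fixed-parameter tractability. I would finally note that $r$ itself is bounded in terms of $\mw(G)$ in nontrivial instances (an argument mirroring Corollary~\ref{cor:tau^tau}), so the dependence on $r$ can be absorbed into the dependence on $\mw(G)$, giving fixed-parameter tractability parameterized by modular-width alone.
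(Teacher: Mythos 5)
There is a genuine gap, and it sits exactly where your plan needs a quantitative bound: the size of your interface. You assert that the fragment counts $(a_1,\dots,a_{r-1})$ ``only need counts up to $r$,'' but nothing in your normalization limits how many distinct cross-module units may protrude into a single module simultaneously, and truncating the counts destroys precisely the information the parent node needs (how many completions remain to be paid for, which determines both the number of units and the vertex budget). Without truncation, the table is indexed by vectors in $\{0,\dots,n\}^{r-1}$, i.e.\ has $n^{\Theta(r)}$ entries; and since by your own closing remark the nontrivial case only guarantees $r < \mw(G)$ (when $r$ is at least the number $\mu$ of top-level modules, one unit taking a vertex from each module already dominates, as in Corollary~\ref{cor:tau^tau}), this yields an $n^{O(\mw)}$-time algorithm --- XP, not FPT. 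The missing ingredient is a combinatorial lemma bounding the number of units: the paper proves that when $r < \mu$, any minimum $r$-grouped dominating set, after an exchange normalization of exactly the Lemma~\ref{lem:twin-K} flavor you anticipate (so that no unit lies entirely inside one module), has at most $\mu$ units. The argument is a counting one: if $|\mathcal{D}| > \mu$, some unit $D$ has no \emph{private} module (a module intersected only by $D$), and then every module is still dominated by $\mathcal{D} \setminus \{D\}$, contradicting minimality. With $k \le \mu$ in hand, all your fragment counts are bounded by $\mu \le \mw(G)$ rather than by $r$, and your interface genuinely becomes $f(\mw)$-bounded --- but you never establish this, and your $r$-capping claim is not a substitute for it.

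Once that lemma is available, your elaborate DP is also unnecessary: since $r + k < 2\mu$, the entire problem becomes a query of size bounded by $\mw(G)$, and the paper simply discharges it via \fo{} model checking, which is fixed-parameter tractable on graphs of bounded modular-width. One smaller remark in your favor: the normalization obstacle you flag (units ``entering and leaving'' a module) is easier than you fear, because a unit touching at least two modules whose touched modules induce a connected subgraph of the quotient is automatically connected in $G$, regardless of the internal structure of each fragment --- so fragments need not be connected inside a module at all, and only their sizes matter, as you guessed. But that observation does not repair the interface-size gap above, which is where your proof of fixed-parameter tractability currently fails.
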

\begin{proof}
Let $\langle G = (V,E), k \rangle$ be an instance of $r$-\textsc{Grouped Dominating Set}.
We only consider the case of $r \ge 2$ since the other case of $r=1$ is known (see \cite{CourcelleMR00,GajarskyLO13}).
We may assume that $G$ is connected since otherwise we can solve the problem on each connected component separately.
We also assume that $G$ has at least $r$ vertices as otherwise the problem is trivial.
Let $M_{1}, \dots, M_{\mu}$ be a partition of $V$ into modules with $2 \le \mu \le \mw(G)$.
For each module $M_{i}$, there is at least one adjacent module $M_{j}$
as $G$ is connected.

We first assume that $r \ge \mu$.
Let $D \subseteq V$ be an arbitrary set of size $r$
that takes at least one vertex from each module $M_{i}$.
Recall that we have either no or all possible edges between two distinct modules.
Thus, the connectivity of $G$ implies that $G[D]$ is connected
and $D$ is a dominating set of $G$.
This implies that $\{D\}$ is an $r$-grouped dominating set with one unit.

Next assume that $r < \mu$.
In this case, we show below that if $G$ has an $r$-grouped dominating set,
then $G$ has an $r$-grouped dominating set with at most $\mu$ units.
This implies that $r+k < 2\mu$, and thus the problem can be solved as \textsc{\fo{} Model Checking}
with a formula of length depending only on $\mu$, which is fixed-parameter tractable parameterized by $\mu$
(see~\cite{CourcelleMR00,GajarskyLO13}).

Before proving the upper bound of $k$, we show that if $G$ has an $r$-grouped dominating set,
then there is a minimum one such that no unit is entirely contained in a module $M_{i}$.
Assume that $\mathcal{D}$ is a minimum $r$-grouped dominating set of $G$.
If $D \subseteq M_{i}$ holds for some $i$ and $D \in \mathcal{D}$,
then there is a vertex $v$ in a module $M_{j}$ adjacent to $M_{i}$ that does not belong to $\bigcup \mathcal{D}$.
This is because, otherwise, $\mathcal{D} \setminus \{D\}$ is still an $r$-grouped dominating set.
Let $u$ be an arbitrary vertex in $D$
and set $D' = D \setminus \{u\} \cup \{v\}$.
As $r \ge 2$, $D'$ intersects both $M_{i}$ and $M_{j}$.
Also we can see that $|D'| = r$, 
$D'$ is connected (as $u$ is adjacent to all vertices in $M_{i}$), and
all vertices dominated by $D$ are dominated by $D'$ as well.
Thus, $\mathcal{D} \setminus \{D\} \cup \{D'\}$ is an $r$-grouped dominating set.
We can repeat this process until we have the claimed property.

As discussed above, it suffices to show the upper bound $k$ for the number of units.
Let $\mathcal{D}$ be a minimum $r$-grouped dominating set of $G$
such that no unit is entirely contained in a module $M_{i}$.
We say that a module $M_{i}$ is \emph{private} for a unit $D \in \mathcal{D}$
if $D$ is the only one in $\mathcal{D}$ that intersects $M_{i}$.
Suppose to the contrary that $|\mathcal{D}| > \mu$.
Then, there is $D \in \mathcal{D}$ such that no module $M_{i}$ is private for $D$.
If a module $M_{i}$ is adjacent to a module $M_{j}$ that intersects $D$,
then since $M_{j}$ is not private for $D$,
$\mathcal{D} \setminus \{D\}$ contains a unit intersecting $M_{j}$, which dominates $M_{i}$.
If a module $M_{i}$ intersects $D$, then 
since $D$ intersects at least two modules and $G[D]$ is connected,
there is a module $M_{j}$ adjacent to $M_{i}$ and intersecting $D$.
Hence, as the previous case, $\mathcal{D} \setminus \{D\}$ contains a unit dominating $M_{i}$.
Therefore, we can conclude that $\mathcal{D} \setminus \{D\}$ is an $r$-grouped dominating set.
This contradicts the minimality of $\mathcal{D}$.
\end{proof}

\subsection{Auxiliary subformulas}
\label{sec:subformulas}
Here we present \fo{} or \mso{2} expressions of some of the subformulas in Section~\ref{sec:logic}.
All of them are standard and presented only to show basic ideas.

The following formulas expressing dominating sets are almost direct translation 
of the definition and should be easy to read.
\begin{align*}
  \mathsf{dominating}(X)
  &= 
  \forall u \; \exists v \colon X(v) \land ((u = v) \lor E(u,v)).
  \\
  \mathsf{dominating}(v_{1}, \dots, v_{p})
  &= 
  \forall u \colon (u = v_{1}) \lor (u = v_{2}) \lor \dots \lor (u = v_{p}) \\
  & \qquad\quad \lor E(u, v_{1}) \lor E(u, v_{2}) \lor \dots \lor E(u, v_{p}).
\end{align*}

The connectivity of $G[X]$ is a little bit tricky to state.
We state that for each nonempty proper subset $Y$ of $X$, there is an edge between $Y$ and $X \setminus Y$.
See e.g., \cite{CyganFKLMPPS15} for the full expression of $\mathsf{connected}$.
The \fo{} version of $\mathsf{connected}$ can be expressed based on the same idea
but the length of the formula depends on the number $r$ of vertices it can take 
(which is fine for us as $r$ is part of the input when we use this formula).
In \cite{CyganFKLMPPS15}, an \mso{2} formula expressing the connectivity of the graph induced by an edge set is also presented.
We call it $\mathsf{connectedE}$ and use it below.

Recall that $\mathsf{span}(F,X)$ expresses that $X$ is the set of all endpoints of the edges in $F$
and that 
$\mathsf{cc}(F,C)$ expresses that $C$ is the vertex set of a connected component of the subgraph induced by $F$.
They can be stated as follows:
\begin{align*}
  \mathsf{span}(F,X) 
  &=
  \forall v \colon X(v) \Leftrightarrow (\exists e \colon F(e) \land I(e, v)),
  \\[.5ex]
  \mathsf{cc}(F,C) &=
  \exists F' \colon (F' \subseteq F) \land \mathsf{span}(F', C) \land \mathsf{connectedE}(F') \\
  & \qquad\quad \land (\forall F'' \colon (F' \subseteq F'' \subseteq F) \land \lnot \mathsf{connectedE}(F'')),
\end{align*}
where the inclusion relation $F \subseteq F'$ can be stated as $\forall e \colon F(e) \Rightarrow F'(e)$.

Finally, when $r$ is part of the parameter, 
$\mathsf{size}_{r}(C)$ meaning that $|C| = r$ can be stated as follows:
\begin{align*}
  \mathsf{size}_{r}(C) &= \dot{\exists} v_{1}, \dots, v_{r} \colon
  \bigwedge_{1 \le i \le r} C(v_{i}) \land \left(\lnot \exists v \colon C(v) \land \bigwedge_{1 \le i \le r} v \ne v_{i} \right).
\end{align*}

%
%
%
\bibliographystyle{splncs04}
\bibliography{ref_with_doi}

\begin{thebibliography}{10}
\providecommand{\url}[1]{\texttt{#1}}
\providecommand{\urlprefix}{URL }
\providecommand{\doi}[1]{https://doi.org/#1}

\bibitem{ArnborgLS91}
Arnborg, S., Lagergren, J., Seese, D.: Easy problems for tree-decomposable
  graphs. J. Algorithms  \textbf{12}(2),  308--340 (1991).
  \doi{10.1016/0196-6774(91)90006-K}

\bibitem{BelmonteHKKKKLO22}
Belmonte, R., Hanaka, T., Kanzaki, M., Kiyomi, M., Kobayashi, Y., Kobayashi,
  Y., Lampis, M., Ono, H., Otachi, Y.: Parameterized complexity of {$(A,
  \ell)$}-path packing. Algorithmica  \textbf{84}(4),  871--895 (2022).
  \doi{10.1007/s00453-021-00875-y}

\bibitem{Bertossi1984}
Bertossi, A.A.: Dominating sets for split and bipartite graphs. Information
  Processing Letters  \textbf{19}(1),  37--40 (1984).
  \doi{10.1016/0020-0190(84)90126-1}

\bibitem{BCKN2015}
Bodlaender, H.L., Cygan, M., Kratsch, S., Nederlof, J.: Deterministic single
  exponential time algorithms for connectivity problems parameterized by
  treewidth. Information and Computation  \textbf{243},  86--111 (2015)

\bibitem{BonnetKTW22}
Bonnet, {\'{E}}., Kim, E.J., Thomass{\'{e}}, S., Watrigant, R.: Twin-width {I:}
  tractable {FO} model checking. J. {ACM}  \textbf{69}(1),  3:1--3:46 (2022).
  \doi{10.1145/3486655}

\bibitem{BoriePT92}
Borie, R.B., Parker, R.G., Tovey, C.A.: Automatic generation of linear-time
  algorithms from predicate calculus descriptions of problems on recursively
  constructed graph families. Algorithmica  \textbf{7}(5{\&}6),  555--581
  (1992). \doi{10.1007/BF01758777}

\bibitem{CKX2010}
Chen, J., Kanj, I.A., Xia, G.: Improved upper bounds for vertex cover. Theor.
  Comput. Sci.  \textbf{411}(40-42),  3736--3756 (2010).
  \doi{10.1016/j.tcs.2010.06.026}

\bibitem{CLZ2009:paired:approx}
Chen, L., Lu, C., Zeng, Z.: Hardness results and approximation algorithms for
  (weighted) paired-domination in graphs. Theoretical Computer Science
  \textbf{410}(47),  5063--5071 (2009). \doi{10.1016/j.tcs.2009.08.004}

\bibitem{CLZ2009:paierd}
Chen, L., Lu, C., Zeng, Z.: A linear-time algorithm for paired-domination
  problem in strongly chordal graphs. Information Processing Letters
  \textbf{110}(1),  20--23 (2009). \doi{10.1016/j.ipl.2009.09.014}

\bibitem{CLZ2010:paierd}
Chen, L., Lu, C., Zeng, Z.: Labelling algorithms for paired-domination problems
  in block and interval graphs. Journal of Combinatorial Optimization
  \textbf{19}(4),  457--470 (2010). \doi{10.1007/s10878-008-9177-6}

\bibitem{Courcelle90mso1}
Courcelle, B.: The monadic second-order logic of graphs. {I}. recognizable sets
  of finite graphs. Inf. Comput.  \textbf{85}(1),  12--75 (1990).
  \doi{10.1016/0890-5401(90)90043-H}

\bibitem{CourcelleE12}
Courcelle, B., Engelfriet, J.: Graph Structure and Monadic Second-Order Logic -
  {A} Language-Theoretic Approach. Cambridge University Press (2012),
  \url{https://www.cambridge.org/knowledge/isbn/item5758776/}

\bibitem{CourcelleMR00}
Courcelle, B., Makowsky, J.A., Rotics, U.: Linear time solvable optimization
  problems on graphs of bounded clique-width. Theory Comput. Syst.
  \textbf{33}(2),  125--150 (2000). \doi{10.1007/s002249910009}

\bibitem{CournierH94}
Cournier, A., Habib, M.: A new linear algorithm for modular decomposition. In:
  CAAP 1994. Lecture Notes in Computer Science, vol.~787, pp. 68--84 (1994).
  \doi{10.1007/BFb0017474}

\bibitem{CyganFKLMPPS15}
Cygan, M., Fomin, F.V., Kowalik, {\L}., Lokshtanov, D., Marx, D., Pilipczuk,
  M., Pilipczuk, M., Saurabh, S.: Parameterized Algorithms. Springer (2015).
  \doi{10.1007/978-3-319-21275-3}

\bibitem{CNPMMVW2022}
Cygan, M., Nederlof, J., Pilipczuk, M., Pilipczuk, M., Van~Rooij, J.M.M.,
  Wojtaszczyk, J.O.: Solving connectivity problems parameterized by treewidth
  in single exponential time. ACM Trans. Algorithms  \textbf{18}(2) (2022).
  \doi{10.1145/3506707}

\bibitem{Desormeaux2020}
Desormeaux, W.J., Haynes, T.W., Henning, M.A.: Paired domination in graphs. In:
  Haynes, T.W., Hedetniemi, S.T., Henning, M.A. (eds.) Topics in Domination in
  Graphs, pp. 31--77. Springer International Publishing, Cham (2020).
  \doi{10.1007/978-3-030-51117-3\_3}

\bibitem{EncisoFGKRS09}
Enciso, R., Fellows, M.R., Guo, J., Kanj, I.A., Rosamond, F.A., Such\'{y}, O.:
  What makes equitable connected partition easy. In: IWPEC 2009. Lecture Notes
  in Computer Science, vol.~5917, pp. 122--133 (2009).
  \doi{10.1007/978-3-642-11269-0\_10}

\bibitem{FrickG04}
Frick, M., Grohe, M.: The complexity of first-order and monadic second-order
  logic revisited. Ann. Pure Appl. Log.  \textbf{130}(1-3),  3--31 (2004).
  \doi{10.1016/j.apal.2004.01.007}

\bibitem{GajarskyLO13}
Gajarsk{\'{y}}, J., Lampis, M., Ordyniak, S.: Parameterized algorithms for
  modular-width. In: {IPEC} 2013. Lecture Notes in Computer Science, vol.~8246,
  pp. 163--176 (2013). \doi{10.1007/978-3-319-03898-8\_15}

\bibitem{Ganian2015}
Ganian, R.: Improving vertex cover as a graph parameter. Discret. Math. Theor.
  Comput. Sci.  \textbf{17}(2),  77--100 (2015). \doi{10.46298/dmtcs.2136}

\bibitem{GimaO22_arxiv}
Gima, T., Otachi, Y.: Extended {MSO} model checking via small vertex integrity.
  CoRR  \textbf{abs/2202.08445} (2022)

\bibitem{GroheKS17}
Grohe, M., Kreutzer, S., Siebertz, S.: Deciding first-order properties of
  nowhere dense graphs. J. {ACM}  \textbf{64}(3),  17:1--17:32 (2017).
  \doi{10.1145/3051095}

\bibitem{HabibP10}
Habib, M., Paul, C.: A survey of the algorithmic aspects of modular
  decomposition. Comput. Sci. Rev.  \textbf{4}(1),  41--59 (2010).
  \doi{10.1016/j.cosrev.2010.01.001}

\bibitem{HS1995:paired}
Haynes, T.W., Slater, P.J.: Paired-domination and the paired-domatic number.
  Congressus Numerantium pp. 65--72 (1995)

\bibitem{HS1998:paired}
Haynes, T.W., Slater, P.J.: Paired-domination in graphs. Networks
  \textbf{32}(3),  199--206 (1998).
  \doi{10.1002/(SICI)1097-0037(199810)32:3<199::AID-NET4>3.0.CO;2-F}

\bibitem{Kreutzer11}
Kreutzer, S.: Algorithmic meta-theorems. In: Esparza, J., Michaux, C.,
  Steinhorn, C. (eds.) Finite and Algorithmic Model Theory, London Mathematical
  Society Lecture Note Series, vol.~379, pp. 177--270. Cambridge University
  Press (2011)

\bibitem{Lampis12}
Lampis, M.: Algorithmic meta-theorems for restrictions of treewidth.
  Algorithmica  \textbf{64}(1),  19--37 (2012). \doi{10.1007/s00453-011-9554-x}

\bibitem{liedloff}
Liedloff, M.: Finding a dominating set on bipartite graphs. Information
  Processing Letter  \textbf{107}(5),  154--157 (2008).
  \doi{10.1016/j.ipl.2008.02.009}

\bibitem{LKH2020:paired}
Lin, C.C., Ku, K.C., Hsu, C.H.: Paired-domination problem on
  distance-hereditary graphs. Algorithmica  \textbf{82}(10),  2809--2840
  (2020). \doi{10.1007/s00453-020-00705-7}

\bibitem{MiddendorfP93}
Middendorf, M., Pfeiffer, F.: On the complexity of the disjoint paths problems.
  Combinatorica  \textbf{13}(1),  97--107 (1993). \doi{10.1007/BF01202792}

\bibitem{NesetrilO2012}
Ne\v{s}et\v{r}il, J., {Ossona de Mendez}, P.: Sparsity: Graphs, Structures, and
  Algorithms. Algorithms and combinatorics, Springer (2012).
  \doi{10.1007/978-3-642-27875-4}

\bibitem{PP2019:paired}
Pradhan, D., Panda, B.: Computing a minimum paired-dominating set in strongly
  orderable graphs. Discrete Applied Mathematics  \textbf{253},  37--50 (2019).
  \doi{10.1016/j.dam.2018.08.022}

\bibitem{Raman2008}
Raman, V., Saurabh, S.: Short cycles make \emph{W} -hard problems hard: {FPT}
  algorithms for \emph{W} -hard problems in graphs with no short cycles.
  Algorithmica  \textbf{52}(2),  203--225 (2008).
  \doi{10.1007/s00453-007-9148-9},
  \url{https://doi.org/10.1007/s00453-007-9148-9}

\bibitem{TedderCHP08}
Tedder, M., Corneil, D.G., Habib, M., Paul, C.: Simpler linear-time modular
  decomposition via recursive factorizing permutations. In: {ICALP} 2008.
  Lecture Notes in Computer Science, vol.~5125, pp. 634--645 (2008).
  \doi{10.1007/978-3-540-70575-8\_52}

\bibitem{TripathiKPPW22}
Tripathi, V., Kloks, T., Pandey, A., Paul, K., Wang, H.: Complexity of paired
  domination in {AT}-free and planar graphs. In: {CALDAM} 2022. Lecture Notes
  in Computer Science, vol. 13179, pp. 65--77 (2022).
  \doi{10.1007/978-3-030-95018-7\_6}

\end{thebibliography}
\end{document}